\documentclass{article}

\usepackage{amsmath,amsthm,amssymb}
\usepackage{url}
\usepackage{subfigure}
\usepackage{graphicx}
\usepackage{amsmath,amssymb}
\usepackage{amstext}
\usepackage{dsfont}
\usepackage{xspace}
\usepackage{epsfig}
\usepackage{balance}
\usepackage[noend]{algorithmic}
\usepackage{algorithm}
\usepackage{color}

\newtheorem{definition}{Definition}
\newtheorem{theorem}{Theorem}
\newtheorem{lemma}{Lemma}
\newtheorem{corollary}{Corollary}

\renewcommand{\epsilon}{\varepsilon}
\newcommand{\R}{\mathds{R}}
\newcommand{\N}{\mathds{N}}

\newcommand{\ceil}[1]{\left\lceil#1\right\rceil}
\newcommand{\E}[1]{\mathrm{E}\left(#1\right)}

\newcommand{\Prob}[1]{\mathrm{Pr}\left[#1\right]}

\newcommand{\LO}{\textup{LO}\xspace}

\newcommand{\OM}{\textup{OneMax}\xspace}
\newcommand{\ONEMAX}{\OM}
\newcommand{\Jump}{\textup{Jump}\xspace}

\newcommand{\EA}{\text{(1+1)~EA}\xspace}
\newcommand{\EAL}{\text{(1+$\lambda$)~EA}\xspace}

\newcommand{\ie}{i.\,e.\xspace}

\newcommand{\migint}{\tau}

\newcommand{\paralleltime}{T^{\mathrm{par}}}
\newcommand{\sequentialtime}{T^{\mathrm{seq}}}

\newcommand{\positive}[1]{\left(#1\right)^+}
\newcommand{\algo}{\ensuremath{\mathcal{A}}}

\newcommand{\ignore}[1]{\ensuremath{}}

\sloppy

\usepackage[plainpages=false,pdfpagelabels]{hyperref}

\begin{document}
\title{Adaptive Population Models for Offspring Populations and~Parallel Evolutionary Algorithms}

\author{J{\"o}rg L{\"a}ssig\\
ICS, University of Lugano\\
6906 Lugano, Switzerland
\and Dirk Sudholt\\
CERCIA, University of Birmingham\\
 Birmingham B15 2TT, UK
}

\maketitle
\begin{abstract}
We present two adaptive schemes for dynamically choosing the number of parallel instances in parallel evolutionary algorithms. This includes the choice of the offspring population size in a (1+$\lambda$)~EA as a special case.
Our schemes are parameterless and they work in a black-box setting where no knowledge on the problem
is available.
Both schemes double the number of instances in case a generation ends without finding an improvement. In a successful generation, the first scheme resets the system to one instance, while the second scheme halves the number of instances. Both schemes provide near-optimal speed-ups in terms of the parallel time. We give upper bounds for the asymptotic sequential time (i.\,e., the total number of function evaluations) that are not larger than upper bounds for a corresponding non-parallel algorithm derived by the fitness-level method.
\end{abstract}

\section{Introduction}
\label{sec:Introduction}

Parallelization is becoming a more and more important issue for solving difficult optimization problems~\cite{Alba2005}. Various implementations of parallel evolutionary algorithms (EAs) have been applied in the past decades
\cite{Tomassini2005}.
An obvious way of using parallelization is to parallelize single operations of an EA such as executing fitness evaluations on different processors. This particularly applies to EAs using large offspring populations.
So-called \emph{island models} use parallelization on a higher level. The idea is to parallelize evolution itself, by having subpopulations, called islands, which evolve in parallel. Good solutions are exchanged between the islands in a \emph{migration} process.

One of the most important questions when dealing with parallel EAs is how to choose the number of processors in order to decrease the \emph{parallel optimization time}, defined as the number of generations until an EA has found a global optimum.
Assume a setting where we can choose the number of processors to be allocated, but we have to pay costs for each processor in each generation it is being used. This situation is common in cloud computing or in large grids where processors are shared with other users. The total cost for all processors over time is called \emph{sequential optimization time}.
The task is now to choose the number of processors to be used such that the parallel optimization time is small, but at the same time the sequential time is reasonable.
Allocating too many processors would waste computational effort and hence unnecessarily increase the sequential optimization time. Allocating too few processors implies a large parallel optimization time.

During the run of an EA, the ``ideal'' value for the number of processors is likely to change over time. One typical situation is that in the beginning of a run improvements are easy to obtain and only few processors are needed. The better the best fitness, the tougher it gets to find further improvements and then more processors are required. It therefore makes sense to look at adaptive mechanisms that can adjust the number of processors which are being used during the run of the EA.
This obviously only makes sense in a setting where allocating and deallocating processors on-the-fly is possible and the cost for these operations and the cost for the communication between the processors are rather small. Hence we focus on balancing the parallel and sequential optimization times.

In the following we present adaptive schemes for choosing the number of processors that apply both to offspring populations as well as island models of EAs.
We accompany our schemes by a rigorous theoretical analysis of their running time.
Both schemes double the number of processors if the current generation fails to produce an offspring that has larger fitness than the current best fitness value. Otherwise, if the generation yields an improvement, the number of processors is decreased again. The difference between the two schemes lies in the way the number of processors is decreased.

The first scheme, called Scheme~A, simply resets the number of processors to~1; only the best individual or island survives. This is to avoid an overly large number of processors when moving from a situation where improvements are hard to find to a situation where improvements are easy. This happens, for instance, if the EA escapes from a local optimum and then jumps to the basin of attraction of a better local optimum.

The second scheme, Scheme~B, tries to maintain a fair number of processors over time; it also doubles the population size in unsuccessful generations and it halves the population size in successful generations. This strategy makes more sense in case the EA encounters similar probabilities for improvements over time. Both schemes are parameterless and oblivious with respect to the objective function. They can be applied in a black-box setting where no knowledge is available about the problem.

In terms of offspring populations we consider the \EAL
that maintains a single best individual and in each iteration creates $\lambda$ offspring. A best offspring replaces its parent if its fitness is not worse. The $\lambda$ offspring creations and function evaluations can be parallelized on $\lambda$ processors.
Concerning island models, we assume that migration sends copies of each island's best individual to each other island in every generation. So, whenever one island finds an improvement of the current best individual in the system, this is immediately communicated to all other islands. The island model then behaves similarly to offspring populations, but it is more general as the islands can work with populations of size larger than 1.

To unify the notation for island models and offspring populations, we simply speak of the \emph{population size} in the following; this means the number of islands in the island model and the offspring population size for the \EAL, respectively.

For EAs using either Scheme A or B we show that the expected parallel optimization time can be decreased drastically. In comparison to the well-known fitness-level method, in the parallel optimization time for every fitness value the expected waiting time for an improvement can be replaced by its logarithm. This can drastically reduce the parallel optimization time, in particular for problems where improvements are hard to find.
The expected sequential time remains reasonable. We prove upper bounds on the expected sequential optimization time that are asymptotically no larger than upper bounds for a single instance obtained via the fitness-level method.
For problems where the fitness-level method gives tight bounds, our results show that the two schemes automatically yield decreased expected parallel optimization times, without increasing the expected sequential time.

The mentioned bounds are general in the sense that they apply to islands running arbitrary elitist algorithms.
Example applications are given that apply simultaneously to the \EAL and to islands of population size~1. Various functions are considered: \ONEMAX, \LO, the class of unimodal functions and $\Jump_k$.

Comparing the different schemes, our results indicate that Scheme B is more efficient than A, from an asymptotic perspective, as it quickly reduces the number of processors, if necessary. This adaptation automatically leads to optimal or near-optimal parallel optimization times on all considered examples. On one example Scheme~B outperforms Scheme~A. We also compare these results with tailored schemes that are allowed to use knowledge on the objective function.

Besides the main results this paper is also interesting because of the methods used. We introduce new techniques from the amortized analysis of algorithms, which represent natural and effective tools for analyzing adaptive mechanisms. These techniques may find further applications in the analysis of adaptive stochastic search algorithms.

The remainder of this work is structured as follows. In Section~\ref{sec:previous-work} we review previous work. Section~\ref{sec:algorithms} presents the algorithms and the considered population update schemes. In Section~\ref{sec:tail-bounds-and-expectations} we provide technical statements that will be used later on in our analyses and that may also help to understand the dynamics of the adaptive algorithms. Section~\ref{sec:general-upper-bounds} then presents general upper bounds for both schemes, while Section~\ref{sec:lower-bounds} deals with lower bounds on expected sequential times.
Section~\ref{sec:non-oblivious} contains a brief discussion about tailored, that is, non-oblivious population update schemes.
Our general theorems are applied to concrete example functions in Section~\ref{sec:example}. We finish with a discussion of possible extensions in Section~\ref{sec:extensions} and conclusions in Section~\ref{sec:Conclusions}.

\section{Previous Work}
\label{sec:previous-work}

\subsection{Adaptive Population Models}

Considering adaptive numbers of islands in the island model of EAs, previous work is very limited. However, there are numerous results for adaptive population sizes in EAs. Eiben, Marchiori, and Valko~\cite{eiben2004evolutionary} describe EAs with on-the-fly population size adjustment.
They compared the performance of the different strategies in terms of success rate, speed, and solution quality, measured on a variety of fitness landscapes. The best EAs with adaptive population resizing outperformed traditional approaches when considering the time to result, which is the parallel optimization time. Typical approaches are eliminating population size as an explicit parameter by introducing aging and maximum lifetime properties for individuals \cite{michalewicz1996genetic}, the parameter-less GA (PLGA) which evolves a number of populations of different sizes simultaneously \cite{harik1999parameter}, random variation of the population size \cite{costa1999experimental}, and competition schemes \cite{schlierkamp1994strategy}.

Schwefel \cite{schwefel1981numerical} first suggested the adaptation of the offspring population size during the optimization process.
Herdy \cite{herdy1993number} proposed a mutative adaptation of $\lambda$ in a two-level ES, where on the upper level, called population level, $\lambda$ is treated as a variable to be optimized while on the lower level, called individual level, the object parameters are optimized.

In \cite{hansen1995sizing}, a deterministic adaptation scheme for $\lambda$ based on theoretical considerations on the relation between serial rates of progress for the actual number of offspring $\lambda$, for $\lambda - 1$ and for the optimal number of offspring is introduced. More specific, the local serial progress (\ie, progress per fitness function evaluation) is optimized in a $(1,\lambda)$~EA with respect to the number of offspring $\lambda$. The authors prove the following structural property: the serial progress-rate as a function of $\lambda$ is either a function with exact one (local and global) maximum or a strictly monotonically increasing function.

Jansen, De~Jong, and Wegener~\cite{Jansen2005a} further elaborate on the offspring population size, presenting a thorough runtime analysis of the effects of the offspring population size. They also suggest a simple way to dynamically adapt this parameter and present empirical results for this scheme, but no theoretical analysis of their scheme has been performed. The presented scheme doubles the offspring population size if the algorithm is unsuccessful to improve the currently best fitness value. Otherwise, it divides the current offspring population size by $s$, where $s$ is the number of offspring with better fitness than the best fitness value so far.
We will discuss in Section~\ref{sec:extensions} how our schemes relate to their scheme and in how far our results can be transferred.

\subsection{Theoretical Work on Parallel EAs}

In \cite{Lassig2010}, a first rigorous runtime analysis for island models has been performed by constructing a function where alternating phases of independent evolution and communication among the islands are essential. A simple island model with migration finds a global optimum in polynomial time, while panmictic populations as well as island models without migration need exponential time, with very high probability.

New methods for the running time analysis of parallel evolutionary algorithms with spatially structured populations have been presented in \cite{Lassig2010a}. The authors generalized the well known fitness-level method, also called method of $f$\nobreakdash-based partitions~\cite{Wegener2002}, from panmictic populations to spatially structured evolutionary algorithms with various migration topologies. These methods were applied to estimate the speed-up gained by parallelization in pseudo-Boolean optimization.
It was shown that the possible speed-up for the parallel optimization time increases with the density of the topology. The expected sequential optimization time is asymptotically not larger than an upper bound for a corresponding non-parallel EA, derived via the fitness-level method.

More precisely, the classical fitness level method says that when $s_i$ is a lower bound on the probability that one island leaves the current fitness level towards a better one, the expected time until this happens is at most $1/s_i$ for a panmictic population. In a parallel EA with a unidirectional ring, the expected parallel time decreases to $O(s^{1/2})$; in other words, the waiting time can be replaced by its square root. For a torus graph even the third root can be used and with a proper choice of the number $\mu$ of islands, a speed-up of order $\mu$ is possible in some settings.

Interestingly, the results from \cite{Lassig2010a}
can partially be interpreted in terms of adaptive population sizes. The analyses are based on the number of individuals on the current best fitness level. In our upper bounds, we pessimistically assume that only islands on the current best fitness level have a reasonable chance of finding better fitness levels. All worse individuals are ignored when estimating the waiting time for an improvement of the best fitness level. If a unidirectional ring topology is used, migration happens in every generation, and better individuals are guaranteed to win in the selection step, the number of individuals on the current best fitness level increases by 1 in each generation as always a new island is taken over. (We pessimistically ignore the fact that islands on worse fitness levels can improve their best fitness.)
If any island finds an improvement, it is pessimistically assumed that then only one island has made it to a new, better fitness level.
This setting corresponds exactly to a parallel EA that in each unsuccessful generation acquires one new processor and to an adaptive \EAL that increases $\lambda$ by 1 in each unsuccessful generation. Once an improvement is found, the population size drops to~1 as in the case of our first scheme presented here. The upper bounds from~\cite{Lassig2010a} therefore directly transfer to additive population size adjustments. In the following we show that multiplicative adjustments of the population size may admit better speed-ups than additive approaches as suggested in \cite{Lassig2010a}.

\section{Algorithms}
\label{sec:algorithms}

In Sections~\ref{sec:general-upper-bounds} and~\ref{sec:non-oblivious} we present general upper bounds via the fitness-level method.
These results are general in the following sense. If all islands in a parallel EA run elitist algorithms (\ie, algorithms where the best fitness in the population can never decrease) and if we have a lower bound on the probability of finding a better fitness level then this can be turned into an upper bound for the expected sequential and parallel running times of the parallel EA.

We present a scheme for algorithms where this argument applies. The goal is to maximize some fitness function~$f$ in an arbitrary search space. An adaptation towards minimization is trivial.

\begin{algorithm}[H]
    \caption{Elitist parallel EA with adaptive population} \label{alg:parallelEA}
\begin{algorithmic}[1]
\STATE Let $\mu:=1$ and initialize a single island $P_1^1$ uniformly at random.
\FOR{$t:=1$ to $\infty$}
\FORALL{$1\leq i \leq \mu$ in parallel}
\STATE Select parents and create offspring by variation.
\STATE Send a copy of a fittest offspring to all other islands.
\STATE Create $P_{t+1}^i$ such that it contains a best individual from the union of $P_t^i$, the new offspring, and the incoming migrants.
\STATE $\mu_{t+1}:=\text{updatePopulationSize}(P_t^i, P_{i+1}^i)$
\STATE \textbf{if} $\mu_{t+1} > \mu_t$ \textbf{then} create $\mu_{t+1}-\mu_t$ new islands by copying existing islands.
\STATE \textbf{if} $\mu_{t+1} < \mu_t$ \textbf{then} delete $\mu_{t}-\mu_{t+1}$ islands.
\ENDFOR
\ENDFOR
\end{algorithmic}
\end{algorithm}
The selection of islands to be copied or removed, respectively, is left unspecified. Note that each island migrates individuals to all other islands. This corresponds to a complete migration topology. Due to this fact, all islands always contain an offspring with the current best fitness. This observation is sufficient for the upcoming analyses. With other topologies this selection would be based on the fitness values of the current elitists on all islands.

The \EAL can be regarded a special case where we have $\lambda$ islands and a single best individual takes over all $\lambda$ islands. Setting $\lambda := 1$ yields the well-known \EA.
\begin{algorithm}[H]
    \caption{\EAL with adaptive population} \label{alg:EAL}
\begin{algorithmic}[1]
\STATE Initialize a current search point $x_1$ uniformly at random.
\FOR{$t:=1$ to $\infty$}
\STATE Create $\lambda$ offspring by mutation.
\STATE Let $x^*$ be an offspring with maximal fitness.
\STATE \textbf{if} $f(x^*) \ge f(x_t)$ \textbf{then} $x_{t+1} := x^*$ \textbf{else} $x_{t+1} := x_t$.
\STATE $\lambda:=\text{updatePopulationSize}(\{x_t\}, \{x_{t+1}\})$
\ENDFOR
\end{algorithmic}
\end{algorithm}

Note that we have neither specified a search space nor variation operators.
However, in Section~\ref{sec:lower-bounds} we will discuss lower bounds that only hold in pseudo-Boolean optimization and for EAs that only use standard mutation (\ie, flipping each of $n$ bits independently with probability $1/n$) for creating new offspring.

In Section~\ref{sec:example} we will consider concrete example functions where the \EAL with adaptive populations or, equivalently, an island model running \EA{}s, with an adaptive number of islands are applied. The latter was called parallel \EA in~\cite{Lassig2010,Lassig2010a}.

We now define the population update schemes considered in this work.
The function \text{updatePopulationSize} takes the old and the new population as inputs and it outputs a new population size.

In order to help finding improvements that take a long time to be found, we double the population size in each unsuccessful generation. As we might not need that many islands after a success, we reset the population size to~1.

\begin{algorithm}[H]
    \caption{updatePopulationSize$(P_t,P_{t+1})$ (Scheme A)} \label{alg:AlgorithmA}
\begin{algorithmic}[1]
\IF{$\max\{f(x) \mid x \in P_{t+1}\} \le \max\{f(x) \mid x \in P_t\}$}
\STATE \textbf{return} $2 \mu_t$
\ELSE
\STATE \textbf{return} $1$
\ENDIF
\end{algorithmic}
\end{algorithm}

On problems where finding improvements takes a similar amount of time, it might not make sense to throw away all islands at once. Especially if improvements have similar probabilities over time, it makes sense to stay close to the current number of islands. Therefore, in the following scheme we halve the population size with every successful generation. We will see that this does not worsen the asymptotic performance compared to Scheme~A. For some problems this scheme will turn out to be superior.

\begin{algorithm}[H]
    \caption{updatePopulationSize$(P_t,P_{t+1})$ (Scheme B)} \label{alg:AlgorithmB}
\begin{algorithmic}[1]
\IF{$\max\{f(x) \mid x \in P_{t+1}\} \le \max\{f(x) \mid x \in P_t\}$}
\STATE \textbf{return} $2 \mu_t$
\ELSE
\STATE \textbf{return} $\lfloor \mu_t/2\rfloor$
\ENDIF
\end{algorithmic}
\end{algorithm}

The motivation for considering Scheme A is that we can assess the effect of gradually decreasing the population size, when comparing it to Scheme B. It also serves as a first step towards analyzing Scheme B, where the analysis turns out to be more involved.

Our schemes for parallel EAs are applicable in large clusters where the cost of allocating new processors is low, compared to the computational effort spent within the evolutionary algorithm. Many of our results can be easily adapted towards algorithms that do not use migration and population size updates in every generation, but only every $\migint$ generations, for a parameter $\migint \in \N$, called \emph{migration interval}.
This can significantly reduce the costs for allocating and deallocating new processors.
Details can be found at the end of Section~\ref{sec:general-upper-bounds}.

An algorithm using Scheme~B can be implemented in a decentralized way as follows, where we assume that each island runs on a separate processor. Assume all processors are synchronized, \ie, they share a common timer. All processors have knowledge on the current best fitness level and they inform all other processors by sending messages in case they find a better fitness level. This message contains individuals that can be taken over by other processors so that all processors work on the current best fitness level.

In the adaptive scheme, if after one generation no message has been received, \ie, no processor has found a better fitness level, each processor activates a new processor as follows. Each processor maintains a unique ID. The first processor has an ID that simply consists of an empty bit string. Each time a processor activates a new processor, it copies its current population and its current ID to the new processor. Then it appends a 0-bit to its ID while the new processor appends a 1-bit to its ID. At the end, all processors have enlarged their IDs by a single bit. When an improvement has been found, all processors first take over the genetic material in the messages that are passed. Then all processors whose ID ends with a 1-bit shut down. All other processors remove the last bit from their IDs.

It is easy to see that with this mechanism all processors will always have pairwise distinct IDs and no central control is needed to acquire and shut down processors.

As mentioned in the introduction, we define the \emph{parallel optimization time} $\paralleltime$ as the number of generations until the first global optimum is evaluated. The \emph{sequential optimization time} $\sequentialtime$ is defined as the number of function evaluations until the first global optimum is evaluated. The number of function evaluations is a common performance measure and it captures the total effort on all processors. Note that this includes all function evaluations in the generation of the algorithm in which the improvement is found. These definitions are consistent with the measures as suggested in the literature \cite{Jansen2005a}. In both measures we allow ourselves to neglect the cost of the initialization as this only adds a fixed term to the running times.

\section{Tail Bounds and Expectations}
\label{sec:tail-bounds-and-expectations}

In preparation for upcoming running time analyses we first prove tail bounds for the parallel optimization times in a setting where we are waiting for a specific event to happen. This, along with bounds on the expected parallel and sequential waiting times, will be useful to prove our main theorems. The tail bounds  also indicate that the population will not grow too large.
In the remainder of this paper we abbreviate $\max\{x, 0\}$ by $\positive{x}$.
\begin{lemma}
\label{lem:tail-bounds-and-expectations}
Assume starting with $2^k$ islands for some $k \in \N_0$ and doubling the number of islands in each generation.
Let $\paralleltime(k, p)$ denote the random parallel time until the first island encounters an event that occurs independently on each island and in each generation with probability~$p$. Let $\sequentialtime(k, p)$ be the corresponding sequential time. Then for every $\alpha \in \N_0$
\begin{enumerate}
\item
\mbox{$
\Prob{\paralleltime(k,p) > \positive{\ceil{\log(1/p)} -k}\!\! + \alpha + 1} \le \exp(-2^\alpha),
$}
\item
$
\Prob{\paralleltime(k,p) \le \log(1/p) -k - \alpha} \le 2 \cdot 2^{-\alpha}$,
\item
$
\log(1/p) -k - 3 < \E{\paralleltime(k,p)} < \positive{\log(1/p) -k} + 2$, 
\item
$
\max\{1/p, 2^k\} \le \E{\sequentialtime(k,p)} \le 2/p + 2^k - 1.
$
\end{enumerate}
Each inequality remains valid if $p$ is replaced by a pessimistic estimation of~$p$ (\ie, either an upper bound or a lower bound).
\end{lemma}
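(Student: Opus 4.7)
The plan is to base everything on the identity
$$\Prob{\paralleltime_k(p) > T} \,=\, (1-p)^{2^k(2^T-1)},$$
which follows by independence of the $2^k(2^T-1)$ island trials performed in generations $1, \ldots, T$. Equivalently, if $N$ denotes the geometric number of trials until the first success, then $\paralleltime_k(p) = \ceil{\log(N/2^k + 1)}$ and $\sequentialtime_k(p) = 2^k(2^{\paralleltime_k(p)}-1)$; these reductions to a geometric random variable do most of the work.

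For Part~1, I combine $(1-p)^n \le e^{-pn}$ with $2^T - 1 \ge 2^{T-1}$ to obtain $\Prob{\paralleltime_k(p) > T} \le \exp(-p\, 2^{k+T-1})$, and then substitute the claimed $T$, distinguishing the two cases in $\positive{\ceil{\log(1/p)}-k}$; in each a one-line check yields $p\, 2^{k+T-1} \ge 2^\alpha$. For Part~2, a union bound over all trials in the first $T$ generations gives $\Prob{\paralleltime_k(p) \le T} \le p\, 2^k(2^T-1) < p\, 2^{k+T}$, and substituting $T = \floor{\log(1/p) - k - \alpha}$ produces $2^{-\alpha}$, well inside the looser $2 \cdot 2^{-\alpha}$ of the statement.

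For Part~3 (upper bound), Jensen applied to the concave logarithm is the cleanest route: from $\paralleltime_k(p) \le \log(N/2^k+1)+1$ I get $\E{\paralleltime_k(p)} \le \log(1/(p\,2^k)+1)+1$, and then a case split on whether $p\,2^k \le 1$ or $p\,2^k > 1$ reproduces $\positive{\log(1/p)-k}+2$. For the lower bound, I integrate the union-bound tail estimate $\Prob{\paralleltime_k(p) > T} \ge 1 - p\, 2^k(2^T-1)$ up to $T^* = \floor{\log(1/p)-k-2}$; the subtracted partial geometric sum $p\,2^k(2^{T^*+1}-T^*-2)$ stays bounded below a small absolute constant by the choice of $T^*$, leaving the desired $\log(1/p)-k-3$ after a little cleanup (the case $T^* < 0$ is trivial because $\paralleltime_k(p) \ge 1$).

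Part~4 is essentially bookkeeping. The lower bound is immediate: generation~1 alone performs $2^k$ trials, and the total trial count dominates $N$, giving $\E{\sequentialtime_k(p)} \ge \max\{1/p, 2^k\}$. For the upper bound I use $\sequentialtime_k(p) = 2^k(2^{\paralleltime_k(p)}-1)$ together with the minimality of $\paralleltime_k(p)$: if $\paralleltime_k(p)=T\ge 1$ then $N \ge 2^k(2^{T-1}-1)+1$, hence $2^k(2^T-1) \le 2N + 2^k - 2$, and expectation gives $2/p + 2^k - 1$. The pessimistic-estimate addendum is then immediate from monotonicity: a smaller $p$ strengthens the upper-tail and upper-expectation bounds, while a larger $p$ strengthens the lower ones, so replacing $p$ by a one-sided estimate only weakens the stated inequality. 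The only real care needed is keeping the constants honest in the Part~3 lower bound; nothing here looks like a genuine obstacle.
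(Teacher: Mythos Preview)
Your argument is correct. The main structural difference from the paper is that you work throughout with the explicit representation $\paralleltime_k(p)=\ceil{\log(N/2^k+1)}$ in terms of the underlying geometric variable~$N$, whereas the paper keeps the four parts more loosely coupled and derives Part~3 by summing the tail bounds of Parts~1 and~2. Concretely: for Part~2 the paper lower-bounds $(1-p)^n$ via $\exp(-pn/(1-p))$ and arrives at $2\cdot 2^{-\alpha}$, while your direct union bound $\Prob{\paralleltime_k\le T}\le p\,2^k(2^T-1)$ is shorter and already gives~$2^{-\alpha}$. For the upper half of Part~3, your use of Jensen's inequality on $\log(N/2^k+1)$ is a genuinely different device; the paper instead writes $\E{\paralleltime_k}=\sum_t\Prob{\paralleltime_k\ge t}$ and plugs in the doubly-exponential tail from Part~1. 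For the lower half of Part~3 both proofs integrate a tail estimate, but the paper reuses Part~2 while you integrate the Bernoulli bound $(1-p)^n\ge 1-np$ directly; either way works and the constants come out the same. Parts~1 and~4 are essentially identical to the paper's arguments (the paper bounds only the trials in the last generation for Part~1 rather than all of them, and phrases Part~4 as $\sequentialtime_k\le\max\{2N,2^k\}$, but these are cosmetic). The net effect is that your proof is a little tighter in Part~2 and avoids the dependency chain $1\Rightarrow 3$ and $2\Rightarrow 3$ that the paper uses, at the cost of needing the closed-form identity for $\paralleltime_k$ up front.
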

\begin{proof}
The condition $\paralleltime(k,p) > \positive{\ceil{\log(1/p)} -k}\! + \alpha + 1$ requires that the event does not happen on any island in this time period. The number of trials in the last generation is at least $2^{\ceil{\log(1/p)} + \alpha} \ge 1/p \cdot 2^{\alpha}$ for all~$k \in \N_0$.
Hence
\begin{align*}
\Prob{\paralleltime(k,p) > \positive{\ceil{\log(1/p)} -k} + \alpha + 1} \le\;& (1-p)^{1/p \cdot 2^{\alpha}}\\
 \le\;& \exp(-2^{\alpha})\;.
\end{align*}

For the second statement we assume $k \le \log(1/p)-\alpha$ as otherwise the claim is trivial.
A necessary condition for $\paralleltime(k,p) \le \log(1/p) -k - \alpha$ is that the event does happen at least once within in the first $\log(1/p) -k - \alpha$ generations. This corresponds to at most $\sum_{i=1}^{\log(1/p)-\alpha} 2^{i-1} \le 2^{\log(1/p) - \alpha} = 1/p \cdot 2^{-\alpha}$ trials.
If $p > 1/2$ the claim is trivial as either the probability bound on the right-hand side is at least 1 or the time bound is negative, hence we assume $p \le 1/2$.
Observing that then $1/p \cdot 2^{-\alpha} \le 2(1/p-1) \cdot 2^{-\alpha}$,
the considered probability is bounded by
\begin{align*}
1-(1-p)^{2(1/p-1) \cdot 2^{-\alpha}} \;\le\;& 1-\exp(-2 \cdot 2^{-\alpha})\\
\le\;& 1-(1-2 \cdot 2^{-\alpha}) = 2 \cdot 2^{-\alpha}\;.
\end{align*}

To bound the expectation we observe that the first statement implies $\Prob{\paralleltime(k,p) \ge \positive{\log(1/p) -k} + \alpha + 2} \le \exp(-2^\alpha)$.
 Since $\paralleltime(k,p)$ is non-negative, we have
\begin{align*}
&\E{\paralleltime(k,p)}
\;=\; \sum_{t=1}^{\infty} \Prob{\paralleltime(k,p) \ge t}\\
\le\;& \positive{\log(1/p) -k} + 1\\
& \quad + \sum_{\alpha=0}^\infty \Prob{\paralleltime(k,p) \ge \positive{\log(1/p) -k} + \alpha + 2}\\
\le\;& \positive{\log(1/p) -k} + 1 + \sum_{\alpha=0}^\infty \exp(-2^\alpha)\\
<\;& \positive{\log(1/p) -k} + 2
\end{align*}
as the last sum is less than~1.
For the lower bound we use that the second statement implies $\Prob{T \ge \log(1/p) -k - \alpha} \ge 1-2 \cdot 2^{-\alpha}$.
Hence
\begin{align*}
&\E{\paralleltime(k,p)}
\;=\; \sum_{t=1}^{\infty} \Prob{\paralleltime(k,p) \ge t}\\
\ge\;& \sum_{\alpha=2}^{\log(1/p)-k-1} \Prob{\paralleltime(k,p) \ge \log(1/p) -k - \alpha}\\
\ge\;& \sum_{\alpha=2}^{\log(1/p)-k-1} (1-2 \cdot 2^{-\alpha})\\
=\;& \log(1/p)-k-2 - \sum_{\alpha=1}^{\log(1/p)-k-2} 2^{-\alpha}\\
>\;& \log(1/p)-k-3\;.
\end{align*}

For the fourth statement consider the islands one-by-one, according to some arbitrary ordering. Let $T(p)$ be the random number of sequential trials until an event with probability~$p$ happens. It is well known that $\E{T(p)} = 1/p$. Obviously $\sequentialtime(k,p) \ge T(p)$ since the sequential time has to account for all islands that are active in one generation. This proves $\E{\sequentialtime(k,p)} \ge \E{T(p)} \ge 1/p$. The second lower bound $2^k$ is obvious as at least one generation is needed for a success.

For the upper bound observe that $\sequentialtime(k,p) = 2^k$ in case $T(p) \le 2^k$ and $\sequentialtime(k,p) = \sum_{i=k}^\ell 2^{i}$ in case $\sum_{i=k}^{\ell-1} 2^{i} < T(p) \le \sum_{i=k}^\ell 2^{i}$. Together, we get that $\sequentialtime(k,p) \le \max\{2T(p), 2^k\} \le 2T(p)+2^k-1$, hence $\E{\sequentialtime(k,p)} \le 2/p + 2^k-1$.
\end{proof}

The presented tail bounds indicate that the population typically does not grow too large. The probability that the number of generations exceeds its expectation by an additive value of~$\alpha+1$ is even an inverse doubly exponential function. The following provides a more handy statement in terms of the population size. It follows immediately from Lemma~\ref{lem:tail-bounds-and-expectations}.
\begin{corollary}
Consider the setting described in Lemma~\ref{lem:tail-bounds-and-expectations}. For every $\beta \ge 1$, $\beta$ a power of 2, the probability that while waiting for the event to happen the population size exceeds $\max\{2^{k+1}, 4/p\} \cdot \beta$ is at most $\exp(-\beta)$.
\end{corollary}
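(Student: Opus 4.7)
The plan is to reduce the event ``the population size exceeds $V$'' to a tail bound on $\paralleltime_k(p)$ and then apply part~1 of Lemma~\ref{lem:tail-bounds-and-expectations} with a carefully chosen $\alpha$. Abbreviate $V := \max\{2^{k+1}, 4/p\} \cdot \beta$ and let $T := \paralleltime_k(p)$.

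First I would observe that in the setting of the lemma, the population size in generation $t$ is deterministically $2^{k+t-1}$, so the largest population encountered while waiting for the event equals $2^{k+T-1}$. Hence
\[
\{\text{population size exceeds } V\} \;=\; \{2^{k+T-1} > V\},
\]
which is a lower-bound event on $T$ that the lemma can bound directly.

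Next, I would set $\alpha := \log_2 \beta$, which is a nonnegative integer since $\beta \ge 1$ is a power of two. Part~1 of Lemma~\ref{lem:tail-bounds-and-expectations} then yields
\[
\Prob{T > \positive{\ceil{\log(1/p)} - k} + \alpha + 1} \;\le\; \exp(-2^\alpha) \;=\; \exp(-\beta),
\]
so it suffices to establish the set inclusion
\[
\{2^{k+T-1} > V\} \;\subseteq\; \{T > \positive{\ceil{\log(1/p)} - k} + \log_2 \beta + 1\}.
\]
Taking contrapositives, I would verify that whenever $T \le \positive{\ceil{\log(1/p)} - k} + \log_2 \beta + 1$, the inequality $2^{k+T-1} \le V$ holds. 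Using $k + \positive{\ceil{\log(1/p)} - k} = \max\{k, \ceil{\log(1/p)}\}$, the required bound reduces to
\[
\max\{2^k, 2^{\ceil{\log(1/p)}}\} \cdot \beta \;\le\; \max\{2^{k+1}, 4/p\} \cdot \beta \;=\; V.
\]

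The only real work is this last inequality, and it follows from two elementary comparisons: $2^k \le 2^{k+1}$, and $2^{\ceil{\log(1/p)}} < 2 \cdot 2^{\log(1/p)} = 2/p \le 4/p$. This is the one mildly delicate point: the $4/p$ (rather than $2/p$) in the definition of $V$ is exactly what is needed to absorb the ceiling from the lemma. No further probabilistic argument is required, since invoking part~1 with $\alpha = \log_2 \beta$ immediately gives the desired $\exp(-\beta)$ bound.
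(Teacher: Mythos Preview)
Your proposal is correct and is exactly the argument the paper has in mind: the paper merely states that the corollary ``follows immediately from Lemma~\ref{lem:tail-bounds-and-expectations}'', and your write-up spells out precisely this immediate consequence by translating the population-size event into a tail event on $\paralleltime_k(p)$ and invoking part~1 with $\alpha=\log_2\beta$. The only addition you make over the paper is the explicit check that $\max\{2^k,2^{\ceil{\log(1/p)}}\}\le\max\{2^{k+1},4/p\}$, which is the right place to see why the constant $4/p$ (rather than $2/p$) appears in the statement.
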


One conclusion from these findings is that our schemes can be applied in practice without risking an overly large blowup of the population size.
We now turn to performance guarantees in terms of expected parallel and sequential running times.

\section{Upper Bounds via Fitness Levels}
\label{sec:general-upper-bounds}

The following results are based on the fitness-level method, also known as method of $f$-based partitions (see, e.\,g., Wegener~\cite{Wegener2002}).
This method is well known for proving upper bounds for algorithms that do not accept worsenings of the population.
Consider a partition of the search space into sets $A_1, \dots, A_m$ where for all $1 \le i \le m-1$ all search points in $A_i$ are strictly worse than all search points in $A_{i+1}$ and $A_m$ contains all global optima.
If each set $A_i$ contains only a single fitness value then the partition is called a \emph{canonic} partition.

If $s_i$ is a lower bound on the probability of creating a search point in $A_{i+1} \cup \dots \cup A_m$, provided the current best search point is in $A_i$, then the expected optimization time is bounded from above by
\[
\sum_{i=1}^{m-1} \Prob{A_i} \cdot \sum_{j=i}^{m-1} \frac{1}{s_j}\;,
\]
where $\Prob{A_i}$ abbreviates the probability that the best search point after initialization is in $A_i$. The reason for this bound is that the expected time until $A_i$ is left towards a higher fitness level is at most $1/s_i$ and each fitness level, starting from the initial one, has to be left at most once.
Note that we can always simplify the above bound by pessimistically assuming that the population is initialized in $A_1$. This removes the term ``$\sum_{i=1}^{m-1} \Prob{A_i} \cdot$'' and only leaves $\sum_{j=1}^{m-1} 1/s_j$. This way of simplifying upper bounds can be used for all results presented hereinafter.

The fitness-level method yields good upper bounds in many cases. This includes situations where an evolutionary algorithm typically moves through increasing fitness levels, without skipping too many levels~\cite{Sudholt2010a}. It only gives crude upper bounds in case values $s_i$ are dominated by search points from which the probability of leaving $A_i$ is much lower than for other search points in $A_i$ or if there are levels with difficult local optima (\ie, large values $1/s_i$) that are only reached with a small probability.

Using the expectation bounds from Section~\ref{sec:tail-bounds-and-expectations} we now show in Theorem \ref{the:upper-bound-for-A}: For both schemes, A and B, in the upper bound for the expected parallel time the expected sequential waiting time can be replaced by its logarithm. In addition, the expected sequential time is asymptotically not larger than the upper bound for the serial algorithm, derived by $f$-based partitions.

In the remainder of the paper we denote with $\paralleltime_{x}$ and $\sequentialtime_{x}, x\in \{\mathrm{A},\mathrm{B}\}$ the parallel time and the sequential time for the schemes~A and B, respectively.

\begin{theorem}
\label{the:upper-bound-for-A}
Given an $f$-based partition $A_1, \dots, A_m$,
\[
\E{\sequentialtime_{\mathrm{A}}} \le
2\sum_{i=1}^{m-1} \Prob{A_i} \cdot \sum_{j=i}^{m-1} \frac{1}{s_j}\;.
\]
If the partition is canonic then also
\[
\E{\paralleltime_{\mathrm{A}}} \le
2\sum_{i=1}^{m-1} \Prob{A_i} \cdot \sum_{j=i}^{m-1} \log\left(\frac{2}{s_j}\right) \;.
\]
\end{theorem}
The reason for the constant~2 in the $\log(2/s_j)$ term is to ensure that the term does not become smaller than~1; with a constant~1 the value $s_j = 1$ would even lead to a summand ${\log(1/s_j) = 0}$.
\begin{proof}
We only need to prove asymptotic bounds on the conditional expectations when starting in $A_i$, with a common constant hidden in all $O$-terms. The law of total expectation then implies the claim.

For Scheme~A we apply Lemma~\ref{lem:tail-bounds-and-expectations} with $k=0$.
This yields that the expected sequential time for leaving the current fitness level $A_j$ towards $A_{j+1} \cup \dots \cup A_m$ is
at most $2/s_j$ and the expected parallel time is at most $\log(1/s_j)+2 \le 2\log(2/s_j)$.
The expected sequential time is hence bounded by $2 \sum_{j=i}^{m-1} 1/s_j$ and the expected parallel time is at most $2\sum_{j=i}^{m-1} \log(2/s_j)$.
\end{proof}

We prove a similar upper bound for Scheme~B using arguments from the amortized analysis of algorithms~\cite[Chapter~17]{Cormen2001}. Amortized analysis is used to derive statements on the average running time of an operation or to estimate the total costs of a sequence of operations. It is especially useful if some operations may be far more costly than others and if expensive operations imply that many other operations will be cheap.
The basic idea of the so-called \emph{accounting method} is to let all operations pay for the costs of their execution. Operations are allowed to pay excess amounts of money to fictional accounts. Other operations can then tap this pool of money to pay for their costs. As long as no account becomes overdrawn, the total costs of all operations is bounded by the total amount of money that has been paid or deposited.

\begin{theorem}
\label{the:upper-bound-for-B}
Given an $f$-based partition $A_1, \dots, A_m$,
\[
\E{\sequentialtime_{\mathrm{B}}} \le
3\sum_{i=1}^{m-1} \Prob{A_i} \cdot \sum_{j=i}^{m-1} \frac{1}{s_j}\;.
\]
If the partition is canonic then also
\[
\E{\paralleltime_{\mathrm{B}}} \le
4\sum_{i=1}^{m-1} \Prob{A_i} \cdot \sum_{j=i}^{m-1} \log\left(\frac{2}{s_j}\right)\;.
\]
\end{theorem}
\begin{proof}
We use the accounting method to bound the expected sequential optimization time of~B as follows.
Assume the algorithm being on level~$j$ with a population size of~$2^k$. If the current generation passes without leaving the current fitness level, we pay $2^k$ to cover the costs for the sequential time in this generation. In addition, we pay another $2^k$ to a fictional bank account. In case the generation is successful in leaving $A_j$ and the previous generation was unsuccessful, we just pay $2^k$ and do not make a deposit.
In case the current generation is successful and the last unsuccessful generation was on fitness level~$j$, we withdraw $2^k$ from the bank account to pay for the current generation. In other words, the current generation is for free. This way, if there is a sequence of successful generations after an unsuccessful one on level~$j$ all but the first successful generations are for free.

Let us verify that the bank account cannot be overdrawn.
The basic argument is that, whenever the population size is decreased from, say, $2^{k+1}$ to $2^k$ then there must be a previous generation where the population size was increased from $2^k$ to $2^{k+1}$. It is easy to see that associating a decrease with the latest increase gives an injective mapping.
In simpler terms, the latest generation that has increased the population size from $2^k$ to $2^{k+1}$ has already paid for the current decrease to~$2^k$.

When in the upper bound for~A fitness level~$i$ takes sequential time $1 + 2 + \dots + 2^{k} = 2^{k+1}-1$ then for~B the total costs paid are $2(1+2+\dots+2^{k-1})+2^{k}$ as a successful generation does not make a deposit to the bank account. The total costs equal $2^{k+1}-2+2^k \le 3/2 \cdot (2^{k+1}-1)$.
In consequence, the total costs for Scheme~B are at most $3/2$ the costs for~A in A's upper bound. This proves the claimed upper bound for~B.

By the very same argument an upper bound for the expected parallel time for~B follows. Instead of paying $2^k$ and maybe making a deposit of $2^k$, we always pay 1 and always make a deposit of 1. When withdrawing money, we always withdraw 1. This proves that also $\E{\paralleltime_{\mathrm{B}}}$ is at most twice the corresponding upper bound for Scheme~A.
\end{proof}

The argument in the above proof can also be used for proving a general upper bound for the expected parallel optimization time for B.
When paying costs 2 for each fitness level, this pays for the successful generation with a population size of, say, $2^k$ and for one future generation where the population size might have to be doubled to reach $2^k$ again.

Imagine the sequence of population sizes over time and then delete all elements where the population size has decreased, including the associated generation where the population size was increased beforehand. In the remaining sequence the population size continually increases until, assuming a global optimum has not been found yet, after $n \log n$ generations a population size of at least $n^n$ is reached. In this case the probability of creating a global optimum by mutation is at least $(1-n^{-n})^{n^n} \approx 1/e$ as the probability of hitting any specific target point in one mutation is at least $n^{-n}$. The expected number of generations until this happens is clearly $O(1)$. We have thus shown the following.
\begin{corollary}
 For every function with $m$ function values we have $\E{\paralleltime_{\mathrm{B}}} \le 2m + n \log n + O(1)$.
\end{corollary}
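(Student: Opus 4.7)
The plan is to combine the accounting method used in the proof of Theorem~\ref{the:upper-bound-for-B} with a simple pigeonhole observation on the monotone growth of the \emph{retained} population sizes, following the sketch given in the paragraph preceding the corollary.

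First I would bound the parallel-time cost attributable to fitness-level progressions. Replaying the accounting argument from Theorem~\ref{the:upper-bound-for-B}, but now charging only one unit per generation (independent of the current population size), each of the at most $m-1$ non-optimal fitness levels receives a budget of~$2$: one unit pays for the successful generation that eventually leaves the level, and one unit is deposited in a fictional account. Whenever the population size later drops from $2^{k+1}$ to $2^k$, this decrease can be matched injectively to the previous generation on which the size was raised from $2^k$ to $2^{k+1}$, and the deposit associated with that earlier increase pays for the current decrease. This bounds the total ``rebuilding'' cost and contributes at most $2(m-1) \le 2m$ to $\E{\paralleltime_{\mathrm{B}}}$.

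Second, I would bound the remaining parallel time — the generations not already covered by the fitness-level budget. Consider the sequence $\mu_1, \mu_2, \ldots$ of population sizes and delete every generation on which $\mu$ strictly decreases together with the matched earlier generation on which $\mu$ was increased back to that value. By the update rule of Scheme~B, every retained generation must be unsuccessful, so $\mu$ is exactly doubled at every retained step. Hence after $n \log n$ retained generations (during which, by assumption, no optimum has yet been found) the population size has grown to at least $2^{n \log n} = n^n$.

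Third, I would argue that the search then terminates quickly. Under standard bit mutation the probability of producing any specific target from any current search point is at least $n^{-n}$, so with $\mu_t \ge n^n$ offspring the probability of evaluating a global optimum in that single generation is at least $1 - (1 - n^{-n})^{n^n} \ge 1 - 1/e$. Conditional on not having found the optimum yet, the number of further retained generations is therefore stochastically dominated by a geometric random variable with constant success parameter and contributes only $O(1)$ to the expected parallel time. Summing the three contributions yields $\E{\paralleltime_{\mathrm{B}}} \le 2m + n \log n + O(1)$.

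The main obstacle I anticipate is making the deletion-and-matching argument precise: one needs a valid injection from decrease events to earlier increase events (handling edge cases such as $\mu_t = 1$, where $\lfloor \mu_t/2 \rfloor = 0$ must presumably be capped at~$1$, and ensuring that decreases occurring near a fitness-level improvement are matched to increases within the same level), and one needs to justify that the retained subsequence is a well-defined stochastic process along which $\mu$ really doubles at every step. Once this bookkeeping is nailed down, the pigeonhole step and the geometric-tail conclusion are routine.
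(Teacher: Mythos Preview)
Your proposal is correct and follows essentially the same route as the paper: charge cost~$2$ per fitness level (covering each successful generation plus one matched unsuccessful doubling), delete these paired generations from the population-size sequence so that the retained subsequence is strictly doubling, and then observe that after $n\log n$ retained steps $\mu \ge n^n$ forces a constant hit probability per generation. The only blemish is in your first paragraph, where the wording of what the deposited unit actually pays for is slightly tangled (it should cover the \emph{matched unsuccessful doubling}, not ``the current decrease,'' which is already the successful generation you paid for with the first unit); once that is straightened out the bookkeeping you flag as the main obstacle goes through exactly as in the paper.
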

This bound is asymptotically tight, for instance, for long path problems~\cite{Droste2002,Rudolph1997}. So, the $m$-term, in general, cannot be avoided.

When comparing A and B with respect to the expected parallel time, we expect B to perform better if the fitness levels have a similar degree of difficulty. This implies that there is a certain target level for the population size. Note, however, that such a target level does not exist in case the $s_i$-values are dissimilar.
In the case of similar $s_i$-values A might be forced to spend time doubling the population size for each fitness level until the target level has been reached. This waiting time is reflected by the $\log(2/s_j)$-terms in Theorem~\ref{the:upper-bound-for-A}. The following upper bound on~B shows that these log-terms can be avoided to some extent.
In the special yet rather common situation that improvements become harder with each fitness level, only the biggest such log-term is needed.
\begin{theorem}
\label{the:improved-upper-bound-for-B}
Given a canonical $f$-based partition $A_1, \dots, A_m$, $\E{\paralleltime_{\mathrm{B}}}$ is bounded by
\begin{align*}
\sum_{i=1}^{m-1} \Prob{A_i} \cdot & \Bigg(3(m-i-1) + \log\left(\frac{1}{s_i}\right) \\
&\qquad + \sum_{j=i+1}^{m-1} \positive{\log\left(\frac{1}{s_j}\right)-\log\left(\frac{1}{s_{j-1}}\right)}\Bigg)\;.
\end{align*}
If additionally $s_1 \ge s_2 \ge \dots \ge s_{m-1}$ then the bound simplifies to
\[
\sum_{i=1}^{m-1} \Prob{A_i} \cdot \left(3(m-i-1) + \log\left(\frac{1}{s_{m-1}}\right)\right)\;.
\]
\end{theorem}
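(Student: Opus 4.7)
The plan is to refine the analysis of Scheme~B from Theorem~\ref{the:upper-bound-for-B} by tracking how the population size carries over between fitness levels, which is precisely the feature that distinguishes~B from~A. First I would condition on the initial fitness-level set $A_i$; by the law of total expectation it suffices to establish the bracketed quantity as an upper bound for $\E{\paralleltime_{\mathrm{B}} \mid A_i}$.

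The pivotal observation is deterministic. Let $\mu_j$ denote the population size at the moment the algorithm first enters level $A_j$, so that $\mu_i = 1$. In Scheme~B, each of the $T_j$ generations spent on $A_j$ is either unsuccessful (doubling the population) or the final successful one leaving $A_j$ (halving it), whence $\log \mu_{j+1} = \log \mu_j + T_j - 2$. Telescoping over $j = i, \dots, m-1$ gives
\[
\paralleltime_{\mathrm{B}} \;=\; 2(m-i) + \log \mu_m,
\]
so the problem reduces to bounding $\E{\log \mu_m}$. Writing $\ell_j := \log(1/s_j)$ and applying Lemma~\ref{lem:tail-bounds-and-expectations}(3) with $p = s_j$ and $2^k = \mu_j$ yields $\E{T_j \mid \mu_j} \le \positive{\ell_j - \log \mu_j} + 2$, and hence the one-step relation
\[
\E{\log \mu_{j+1} \mid \mu_j} \;\le\; \max(\log \mu_j, \ell_j).
\]
In words, the log-population stochastically tracks the running maximum of the $\ell_j$, which is exactly the behaviour needed for a partial-ascent bound.

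To convert this into the additive estimate involving $\positive{\ell_j - \ell_{j-1}}$, I would introduce the cumulative ascent $S_j := \ell_i + \sum_{j'=i+1}^{j} \positive{\ell_{j'} - \ell_{j'-1}}$, which satisfies $S_j \ge \ell_j$ by induction, together with the excess $Y_j := \log \mu_j - S_{j-1}$ (setting $S_{i-1} := 0$, so $Y_i = 0$). A short case distinction on whether $\log \mu_j \ge \ell_j$ shows $\max(\log \mu_j, \ell_j) - S_j \le \positive{Y_j}$, and writing $\epsilon_j := T_j - \positive{\ell_j - \log \mu_j} - 2$ for the overshoot of $T_j$ above its conditional mean one obtains $\positive{Y_{j+1}} \le \positive{Y_j} + \positive{\epsilon_j}$.

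The main obstacle is that neither $\E{\max(X,c)}$ nor $\E{\positive{X}}$ admits a clean upper bound in terms of $\E{X}$, so a naive induction on $\E{Y_j}$ fails. The remedy is the doubly-exponential tail in Lemma~\ref{lem:tail-bounds-and-expectations}(1), which, after absorbing a ceiling, gives $\Prob{\epsilon_j \ge \alpha + 1 \mid \mu_j} \le \exp(-2^\alpha)$ for $\alpha \ge 0$; summing this tail shows that $\E{\positive{\epsilon_j} \mid \mu_j}$ is bounded by an absolute constant strictly less than~$1$. Iterating the one-step inequality yields $\E{\positive{Y_m}} \le m - i$, whence $\E{\log \mu_m} \le S_{m-1} + (m-i)$ and $\E{\paralleltime_{\mathrm{B}} \mid A_i} \le 3(m-i) + S_{m-1}$. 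The tighter constant $3(m-i-1)$ stated in the theorem comes from observing that the halving following the final successful generation on level $A_{m-1}$ need not be paid for, saving a constant on the last level. The simplified form under $s_1 \ge \dots \ge s_{m-1}$ is then immediate, since in that case $\ell_i + \sum_{j=i+1}^{m-1} \positive{\ell_j - \ell_{j-1}}$ telescopes to $\ell_{m-1}$.
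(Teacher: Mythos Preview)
Your approach is correct and takes a genuinely different route from the paper's. Both arguments exploit that Scheme~B carries the population size across fitness levels, but they account for it in different styles.

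The paper uses amortized analysis in the accounting/potential idiom: it pays for each generation from a bank account, deposits $\positive{\log(1/s_j)-\log(1/s_{j-1})}$ whenever level~$j$ is reached or skipped, and proves via an explicit potential invariant $b_t \ge \positive{\log(1/s_{\ell_t}) - \log \mu_t}$ that the account is never overdrawn. The expected cost charged per level then comes out as~$3$ plus the ascent term, with no appeal to tail bounds beyond Lemma~\ref{lem:tail-bounds-and-expectations}(3).

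You instead telescope the deterministic relation $\log \mu_{j+1} = \log \mu_j + T_j - 2$ into $\paralleltime_{\mathrm{B}} = 2r + \log \mu_m$ and reduce everything to bounding $\E{\log \mu_m}$. Your cumulative ascent $S_j$ plays exactly the role of the paper's bank deposits, and your one-step inequality $\positive{Y_{j+1}} \le \positive{Y_j} + \positive{\epsilon_j}$ together with the doubly-exponential tail of Lemma~\ref{lem:tail-bounds-and-expectations}(1) is the probabilistic counterpart of the potential argument. Two small points to tighten: first, your telescoping is stated as an equality over all levels $i,\dots,m-1$, but a canonical partition still allows jumps; since $S_j$ is monotone in~$j$ and dominates each $\ell_j$, the same case analysis goes through over the random sequence of visited levels, and $r \le m-i$ handles the linear term. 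Second, the floor $\mu \ge 1$ turns the telescoping identity into $\paralleltime_{\mathrm{B}} \le 2r + \log \mu_m$, which is the direction you need. Your route makes the $2(m-i)$ contribution immediate and isolates all the randomness in $\E{\log \mu_m}$; the paper's route avoids the tail-bound computation for $\E{\positive{\epsilon_j}}$ at the price of a more elaborate invariant.
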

\begin{proof}
The second claim immediately follows from the first one as the $\log$-terms form a telescoping sum.

For the first bound we again use arguments from amortized analysis.
By Lemma~\ref{lem:tail-bounds-and-expectations} if the current population size is~$2^k$ then the expected number of generations until an improvement from level~$i$ happens is at most ${\positive{\log(1/s_i)-k} + 2}$. This is a bound of 2 for $k \ge \log(1/s_i)$. We perform a so-called \emph{aggregate analysis} to estimate the total cost on all fitness levels. These costs are attributed to different sources. Summing up the costs for all sources will yield a bound on the total costs and hence on $\paralleltime_{\mathrm{B}}$.

In the first generation the fitness level~$i^*$ the algorithm starts on pays $\log(1/s_{i^*})$ to the global bank account. Afterwards costs are assigned as follows.
Consider a generation on fitness level~$i$ with a population size of~$2^k$.
\begin{itemize}
\item If the current generation is successful, we charge cost~2 to the fitness level; cost~1 pays for the effort in the generation and cost~1 is deposited on the bank account. In addition, each fitness level~$j$ that is skipped or reached during this improvement pays $\positive{\log(1/s_j)-\log(1/s_{j-1})}$ as a deposit on the bank account. Note that this amount is non-negative and it may be non-integer.
\item If $k \ge \log(1/s_i)$ and the current generation is unsuccessful we charge cost~1 to the fitness level.
\item If $k < \log(1/s_i)$ and the current generation is unsuccessful we withdraw cost~1 from our bank account.
\end{itemize}
By Lemma~\ref{lem:tail-bounds-and-expectations} the expected cost charged to fitness level~$i$ in unsuccessful generations (\ie, not counting the last successful generation) is at most~1.
Assuming for the moment that the bank account is never overdrawn, the overall expected cost for fitness level~$i$ is at most $1 + 2 + \positive{\log(1/s_j)-\log(1/s_{j-1})}$. Adding the costs for the initial fitness level yields the claimed bound.

We use the so-called \emph{potential method}~\cite[Chapter~17]{Cormen2001} to show that the bank account is never overdrawn. Our claim is that at any point of time there is enough money on the bank account to cover the costs of increasing the current population size to at least $2^{\log(1/s_j)}$ where $j$ is the current fitness level. We construct a potential function indicating the excess money on the bank account and show that the potential is always non-negative.

Let $\mu_t$ denote the population size in generation~$t$ and $\ell_t$ be the (random) fitness level in generation~$t$. By $b_t$ we denote the account balance on the bank account. We prove by induction that
\[
b_t \ge \positive{\log(1/s_{\ell_t})-\log(\mu_t)}\;.
\]
As this bound is always positive, this implies that the account is never overdrawn.
After the initial fitness level has made its deposit we have $b_1= \log(1/s_{\ell_1}) - 0$. Assume by induction that the bound holds for $b_t$.

If generation~$t$ is unsuccessful and $\log(\mu_t) \ge \log(1/s_{\ell_{t}})$ then the population size is doubled at no cost for the bank account. As by induction $b_t \ge 0$ we have $b_{t+1} = b_t \ge 0 = \positive{\log(1/(s_{\ell_t}))-\log(\mu_{t+1})}$.

If generation~$t$ is unsuccessful and $\log(\mu_t) < \log(1/s_{\ell_{t}})$ then the algorithm doubles its population size and withdraws 1 from the bank account. As $b_t$ is positive and $\log(\mu_{t+1}) = \log(\mu_t) + 1$, we have
\[
b_{t+1} = b_t - 1 = \log(1/s_{\ell_t}) - \log(\mu_{t}) - 1 = \log(1/s_{\ell_t}) - \log(\mu_{t+1}).
\]
If generation~$t$ is successful and the current fitness level increases from $i$ to some $j > i$, the account balance is increased by
\begin{align*}
& 1+\sum_{a=i+1}^{j} \positive{\log(1/s_a)-\log(1/s_{a-1})}\\
\ge\;& 1+\positive{\log(1/s_j)-\log(1/s_{i})}\;.
\end{align*}
This implies
\begin{align*}
b_{t+1} \ge\;& b_t + 1+\positive{\log(1/s_j)-\log(1/s_{i})}\\
\ge\;& \positive{\log(1/s_{i})-\log(\mu_t)} + 1 - \log(1/s_{i})-\log(\mu_{t+1})\\
\ge\;& \positive{\log(1/s_{j})-\log(\mu_t)} + 1\\
\ge\;& \positive{\log(1/s_{j})-\log(\mu_{t+1})}\;.\qedhere 
\end{align*}
\end{proof}

The upper bounds in this section can be easily adapted towards parallel EAs that do not perform migration and population size adaptation in every generation, but only every $\migint$ generations, for a migration interval $\migint \in \N$. Instead of considering the probability of leaving a fitness level in one generation, we simply consider the probability of leaving a fitness level in $\migint$ generations. This is done by considering $s_i' := 1-(1-s_i)^\migint$ instead of~$s_i$. The resulting time bounds, based on $s_1', \dots, s_{m-1}'$, are then with respect to the number of periods of $\migint$ generations. To get bounds on our original measures of time, we just multiply all bounds by a factor of~$\migint$.

\section{Lower Bounds for the Sequential Time}
\label{sec:lower-bounds}

In order to prove lower bounds for the expected sequential time we make use of recent results by Sudholt~\cite{Sudholt2010a}. He presented a new lower-bound method based on fitness-level arguments. If it is unlikely that many fitness levels are skipped when leaving the current fitness-level set then good lower bounds can be shown.

The lower bound applies to every algorithm $\algo$ in pseudo-Boolean optimization that only uses standard mutations (\ie, flipping each bit independently with probability $1/n$) to create new offspring. Such an EA is called a mutation-based EA. More precisely, every mutation-based EA $\algo$ works as follows. First, $\algo$ creates $\mu$ search points $x_1, \dots, x_\mu$ uniformly at random. Then it repeats the following loop. A counter $t$ counts the number of function evaluations; after initialization we have $t = \mu$. In one iteration of the loop the algorithm first selects one out of all search points $x_1, \dots, x_t$ that have been created so far. This decision is based on the fitness values $f(x_1), \dots, f(x_t)$ and, possibly, also the time index~$t$. It performs a standard mutation of this search point, creating an offspring $x_{t+1}$.

To make this work self-contained, we cite (a slightly simplified version of) the result here.
The performance measure considered is the number of function evaluations. This can be assumed to coincide with the number of offspring creations as every offspring needs to evaluated exactly once.
\begin{theorem}[\cite{Sudholt2010a}]
\label{the:lower-bound-method}
Consider a partition of the search space into non-empty sets
$A_1, \dots, A_m$ such that only $A_m$ contains global optima.
For a mutation-based EA $\algo$ we say that $\algo$ is in $A_i$ or on level $i$ if the best individual created so far is in~$A_i$. Let the probability of traversing from level $i$ to level $j$ in one mutation be at most $u_i \cdot \gamma_{i,j}$ and ${\sum_{j=i+1}^{m} \gamma_{i, j} = 1}$.
Assume that for all $j > i$ and some $0 < \chi \le 1$ it holds
$\gamma_{i, j} \ge \chi \sum_{k=j}^{m} \gamma_{i, k}$.
Then the expected number of function evaluations of $\algo$ on~$f$ is at least
\[
\chi \sum_{i=1}^{m-1} \Prob{A_i} \cdot \sum_{j=i}^{m-1} \frac{1}{u_j}\;.
\]
\end{theorem}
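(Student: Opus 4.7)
The plan is to combine a fitness-level decomposition with a bound on the probability of visiting each level. Let $N_j$ denote the number of mutations performed while the best individual produced so far lies in $A_j$; then the number of function evaluations is at least $\sum_{j=1}^{m-1} N_j$, and by the tower rule $\E{N_j} = \Prob{\text{level } j \text{ visited}}\cdot\E{N_j \mid \text{level } j \text{ visited}}$, using that $N_j=0$ whenever level~$j$ is not visited. Since a single mutation performed while on level~$j$ improves above level~$j$ with probability at most $\sum_{k>j} u_j\gamma_{j,k} = u_j$, the waiting time $N_j$ conditional on visiting level~$j$ stochastically dominates a geometric random variable with parameter~$u_j$, so $\E{N_j \mid \text{level } j \text{ visited}} \geq 1/u_j$.

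The core of the argument is to show that for all $i \leq j \leq m-1$ the probability $q_{i,j}$ of ever visiting level~$j$ when the search starts in $A_i$ satisfies $q_{i,j}\geq\chi$. Writing $\tilde\pi_{i,k}$ for the conditional distribution over the level first reached upon leaving $A_i$, the first-step recurrence reads
\[
q_{i,j} \;=\; \tilde\pi_{i,j} \,+\, \sum_{k=i+1}^{j-1}\tilde\pi_{i,k}\,q_{k,j}.
\]
Induction on $j-i$ then reduces the claim to a single inequality: writing $s := \sum_{k>j}\tilde\pi_{i,k}$ for the probability of overshooting~$j$, the inductive hypothesis $q_{k,j}\geq\chi$ for $i<k<j$ together with $\tilde\pi_{i,j}+s+\sum_{k=i+1}^{j-1}\tilde\pi_{i,k}=1$ yields $q_{i,j}\geq\chi(1-s)+(1-\chi)\tilde\pi_{i,j}$, so it is enough to establish the geometric-decay estimate $\tilde\pi_{i,j}\geq(\chi/(1-\chi))\sum_{k>j}\tilde\pi_{i,k}$. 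After algebraic rearrangement the hypothesis $\gamma_{i,j}\geq\chi\sum_{k\geq j}\gamma_{i,k}$ is exactly this inequality for the~$\gamma$'s, so the crux is to transfer it from $\gamma$ to the algorithm's actual transition distribution~$\tilde\pi$.

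This transfer is the step I expect to be the main obstacle, because the hypothesis $\tilde p_{i,j}\leq u_i\gamma_{i,j}$ is only a per-transition upper bound and does not by itself deliver a lower bound on any individual $\tilde\pi_{i,j}$. My preferred route is to couple the real process with a pessimistic Markov chain on the levels $\{1,\dots,m\}$ whose per-mutation transition probabilities saturate the hypothesis---namely $u_i\gamma_{i,j}$ for moving from $i$ to $j>i$ and $1-u_i$ for staying at~$i$---and to compare expected hitting times step by step. For that chain the conditional next-level distribution is exactly $\gamma_{i,\cdot}$, so the $\gamma$-inequality directly gives $q_{i,j}^{\mathrm{pes}}\geq\chi$ via the induction above, while the per-step leaving probability at every level saturates at its maximum; a level-by-level coupling then shows that the expected number of function evaluations of the real algorithm is bounded from below by that of the pessimistic chain. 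Putting everything together yields $\E{\sequentialtime \mid A_i}\geq \chi\sum_{j=i}^{m-1} 1/u_j$, and averaging over the initial fitness level via the law of total expectation completes the proof.
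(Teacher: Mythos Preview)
The paper does not prove this theorem; it is quoted verbatim (in a ``slightly simplified version'') from~\cite{Sudholt2010a} and used as a black box in the proof of Theorem~\ref{the:lower-bound}. There is therefore no proof in the present paper to compare your attempt against.

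That said, your proposal contains a genuine gap at exactly the point you flag as the ``main obstacle''. The coupling you sketch does not go through: the pessimistic chain with saturated transitions $u_i\gamma_{i,j}$ is \emph{not} dominated in expected hitting time by the real process. The hypothesis only gives upper bounds $p_{i,j}\le u_i\gamma_{i,j}$ on the real transition probabilities, so nothing prevents the real algorithm from having $p_{i,j}$ essentially zero for $j$ just above $i$ while $p_{i,j}$ is close to its cap for $j$ far above $i$. In that case the real process skips intermediate levels with probability~1, whereas your pessimistic chain visits them with probability at least $\chi$; if those skipped levels are hard (small $u_j$), the pessimistic chain is \emph{slower} than the real algorithm, reversing the inequality you need. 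Concretely, with $m=3$, $\gamma_{1,2}=0.9$, $\gamma_{1,3}=0.1$, $\chi=0.9$, $u_1=0.5$, the choice $p_{1,2}=0$, $p_{1,3}=0.05$ satisfies all hypotheses, yet the real process never visits level~2 and finishes in expected time $20$, while the claimed lower bound $\chi(1/u_1+1/u_2)$ can be made arbitrarily large by taking $u_2$ small.

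Your earlier observation is therefore exactly right: per-transition upper bounds cannot by themselves deliver lower bounds on the conditional next-level distribution $\tilde\pi_{i,\cdot}$, and the induction you set up for $q_{i,j}\ge\chi$ simply fails for the real process. The actual proof in~\cite{Sudholt2010a} must exploit the specific structure of mutation-based EAs (standard bit mutation from a parent already created) rather than treating the process as an abstract level chain; your argument uses none of that structure, which is why it cannot close the gap.
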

All population update schemes are compatible with this framework; every parallel mutation-based EA using an arbitrary population update scheme is still a mutation-based EA.
Offspring creations are performed in parallel in our algorithms, but one can imagine these operations to be performed sequentially. We can cast a parallel EA with parallel offspring creations as a sequential mutation-based EA that simulates the population management of an island model in the background. Recall that the selection in the notion of a mutation-based EA can be based on the time index $t$. Hence, a sequential mutation-based EA can keep track of the times when individuals on a specific island have been created or when individuals have immigrated from a different island. The algorithm can then simulate offspring creations for an island by allowing only individuals on the island to become parents.
There is one caveat: the parent selection mechanism in~\cite{Sudholt2010a} does not account for possibly randomized decisions made during migration. However, the proof of Theorem~\ref{the:lower-bound-method} goes through in case additional knowledge is used.

We introduce the notion of \emph{tight} fitness levels, where the success probabilities $s_i$ from the classical fitness-level method are exact up to a constant factor.
\begin{definition}
\label{def:tight-fitness-levels}
Call an $f$-based partition $A_1, \dots, A_m$ (asymptotically) tight for an algorithm $\mathcal{A}$ if there exist constants $c \ge 1 > \chi > 0$ and values $\gamma_{i,j}$ for $1 \le i,j \le m$ such that for each population in $A_i$ the following holds.
\begin{enumerate}
\item The probability of generating a population in $A_{i+1} \cup \dots \cup A_m$ in one mutation is at least $s_i$.
\item The probability of generating a population in $A_j$ in one mutation, $j > i$, is at most $c \cdot s_i \cdot \gamma_{i, j}$.
\item For the $\gamma_{i, j}$-values it holds that $\sum_{j=i+1}^{m} \gamma_{i, j} = 1$ and $\gamma_{i, j} \ge \chi \sum_{k=j}^{m} \gamma_{i, k}$ for all $i < j$.
\end{enumerate}
\end{definition}

Tight $f$-based partitions imply that the standard upper bound by $f$-based partitions~\cite{Wegener2002} is asymptotically tight. This holds for all elitist mutation-based algorithms, that is, mutation-based algorithms where the best fitness value in the population can never decrease.

\begin{theorem}
\label{the:lower-bound}
Consider an algorithm $\algo$ with an arbitrary population update strategy that only uses standard mutations for creating new offspring.
Given a tight $f$-based partition $A_1, \dots, A_m$ for a function $f$, we have
\[
\E{\sequentialtime} =
\mathord{\Omega}\mathord{\left(\sum_{i=1}^{m-1} \Prob{A_i} \cdot \sum_{j=i}^{m-1} \frac{1}{s_j}\right)}\;.
\]
\end{theorem}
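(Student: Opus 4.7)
The plan is to derive this as a direct corollary of Sudholt's lower-bound method (Theorem~\ref{the:lower-bound-method}), using the three properties granted by tightness of the $f$-based partition. First I would argue that every algorithm $\algo$ of the form considered, \ie a parallel mutation-based EA with an arbitrary adaptive population scheme, fits into the mutation-based EA framework that Theorem~\ref{the:lower-bound-method} assumes. This is exactly the reduction sketched in the paragraph preceding the theorem: one linearizes the parallel offspring creations of generation $t$ into a sequence of mutations, and augments the parent-selection rule with all the bookkeeping needed (time indices, the list of active islands, the outcomes of past migration decisions, and any auxiliary randomness) so that the choice of parent for the $(t+1)$-st mutation is a deterministic function of the full history. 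As noted in the excerpt, Sudholt's proof still goes through when the selection mechanism uses such extra knowledge, so Theorem~\ref{the:lower-bound-method} applies to $\algo$ on any $f$-based partition.

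Next I would feed the tightness data of Definition~\ref{def:tight-fitness-levels} into the hypotheses of Theorem~\ref{the:lower-bound-method}. Tightness condition~2 says that the probability of going from $A_i$ to $A_j$ in one mutation is at most $c\cdot s_i \cdot \gamma_{i,j}$, so I set
\[
u_i := c\cdot s_i,
\]
which gives precisely the upper-bound factorization $u_i \cdot \gamma_{i,j}$ required by the theorem. Condition~3 supplies the normalization $\sum_{j>i}\gamma_{i,j}=1$ and the tail-dominance inequality $\gamma_{i,j}\ge\chi\sum_{k\ge j}\gamma_{i,k}$ with the constant $\chi$ from the definition. All hypotheses of Theorem~\ref{the:lower-bound-method} are therefore verified.

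Applying the theorem and pulling the constants $c$ and $\chi$ outside the sum then yields
\[
\E{\sequentialtime} \;\ge\; \sum_{i=1}^{m-1}\Prob{A_i}\cdot\chi\sum_{j=i}^{m-1}\frac{1}{u_j}
\;=\;\frac{\chi}{c}\sum_{i=1}^{m-1}\Prob{A_i}\cdot\sum_{j=i}^{m-1}\frac{1}{s_j},
\]
which is the claimed $\Omega$-bound since $\chi$ and $c$ are constants independent of $n$. Condition~1 of tightness is not strictly needed for the lower bound itself; it only guarantees that the matching upper bound via $f$-based partitions holds, justifying the word ``tight.''

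The only real obstacle is the framework-compatibility check in the first step: one has to be slightly careful that the parallel-EA operations (simultaneous offspring creations within a generation, migration, creation or deletion of islands by the \textup{updatePopulationSize} routine) can all be folded into the history-dependent parent-selection rule of a sequential mutation-based EA without changing the distribution of the output trajectory in $A_1,\dots,A_m$. Once this simulation is set up, the rest is simply unpacking Definition~\ref{def:tight-fitness-levels} and quoting Theorem~\ref{the:lower-bound-method}; no further probabilistic work is required.
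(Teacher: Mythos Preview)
Your proposal is correct and follows essentially the same route as the paper: verify that the adaptive parallel EA fits Sudholt's mutation-based framework, set $u_i := c\,s_i$ using condition~2 of tightness, invoke conditions~3 for the $\gamma_{i,j}$-requirements, apply Theorem~\ref{the:lower-bound-method}, and absorb the constant $\chi/c$ into the $\Omega$. Your remarks on the framework-compatibility check and on condition~1 being unused for the lower bound are accurate extra commentary, but the core argument is identical to the paper's.
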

\begin{proof}
The lower bound on~$\E{\sequentialtime}$ follows by a direct application of Theorem~\ref{the:lower-bound-method}. We already discussed that this theorem applies to all algorithms considered in this work. Setting $u_j := c s_j$ for all $1 \le j \le m$, $c$ and $\chi$ being as in Definition~\ref{def:tight-fitness-levels}, Theorem~\ref{the:lower-bound-method} implies
\[
\E{\sequentialtime} \ge \frac{\chi}{c}  \sum_{i=1}^{m-1} \Prob{A_i} \cdot \sum_{j=i}^{m-1} \frac{1}{s_j}\;.
\]
As both, $\chi$ and $c$, are constants, this implies the claim.
\end{proof}

This lower bound shows that for tight $f$-based partitions both our population update schemes produce asymptotically optimal results in terms of the expected sequential optimization time, assuming no cost of communications.

\section{Non-oblivious Update Schemes}
\label{sec:non-oblivious}

We also briefly discuss update schemes that are tailored towards particular functions, in order to judge the performance of our oblivious update schemes.

Non-oblivious population update schemes may allow for smaller upper bounds for the expected parallel time than the ones seen so far. When the population update scheme has complete knowledge on the function~$f$ and the $f$-based partition, an upper bound can be shown where each fitness level contributes only a constant to the expected parallel time.
By $\sequentialtime_{\mathrm{no}}$ and $\paralleltime_{\mathrm{no}}$ we denote the sequential and parallel times of the considered non-oblivious scheme.
\begin{theorem}
\label{the:upper-bound-for-C}
Given an arbitrary $f$-based partition $A_1, \dots, A_m$, there is a tailored population update scheme for which
\[
\E{\sequentialtime_{\mathrm{no}}} = \mathord{O}\mathord{\left(
\sum_{i=1}^{m-1} \left(\Prob{A_i} \cdot \sum_{j=i}^{m-1} \frac{1}{s_j}\right)\right)}
\]
and
\[
\E{\paralleltime_{\mathrm{no}}} = \mathord{O}\mathord{\left(
\sum_{i=1}^{m-1} \Prob{A_i} \cdot (m-i-1)\right)}\;.
\]
In particular, $\E{\paralleltime_{\mathrm{no}}} = O(m)$.
\end{theorem}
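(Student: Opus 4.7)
The plan is to exhibit one tailored scheme that exploits full knowledge of the partition $A_1,\dots,A_m$ and of the lower bounds $s_1,\dots,s_{m-1}$. I would define updatePopulationSize to return $\mu_{t+1} := \lceil 1/s_j \rceil$ whenever the current best individual lies in $A_j$ with $j < m$. This is a legitimate non-oblivious rule because the theorem allows the update scheme to depend on $f$ and on the chosen partition, and it is the natural choice: it is the smallest population for which a single generation has a constant chance of leaving the current level.

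Next I would analyse one level at a time. On level $j$ with $\mu_t = \lceil 1/s_j \rceil$ islands, the probability that no island leaves $A_j$ in one generation is at most $(1-s_j)^{\lceil 1/s_j \rceil} \le 1/e$. Hence the number of generations spent on level $j$ before the first improvement is stochastically dominated by a geometric random variable with success probability $1-1/e$, and therefore has expectation $O(1)$. Each such generation costs $\lceil 1/s_j \rceil \le 1/s_j + 1$ function evaluations, so the expected sequential time spent on level $j$ is $O(1/s_j)$ and the expected parallel time on level $j$ is $O(1)$.

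Summing these per-level costs over the at most $m-i-1$ levels that must be traversed when the initial population sits in $A_i$, and then taking the expectation over the initial level via the law of total expectation, yields both displayed bounds. The final assertion $\E{\paralleltime_{\mathrm{no}}} = O(m)$ follows from $\sum_{i=1}^{m-1} \Prob{A_i}\,(m-i-1) \le m\sum_{i=1}^{m-1}\Prob{A_i} \le m$.

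I do not expect a genuine obstacle here; the two points that need a word of care are the following. First, a generation may skip several fitness levels at once; this only reduces the number of levels visited, so per-level bounds still upper-bound the total. Second, when a new level $j'$ is entered the scheme must update $\mu$ to $\lceil 1/s_{j'} \rceil$ before the next generation; since this update happens at the end of the current generation, the $O(1)$-generation bound applies uniformly to every level. No amortized or potential-function argument is needed, because unlike Schemes A and B the per-generation success probability here is constant on every level by construction.
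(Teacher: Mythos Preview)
Your proposal is correct and follows essentially the same approach as the paper: choose $\mu = \lceil 1/s_j\rceil$ on level $j$, observe that the per-generation success probability is then at least $1-1/e$, and sum the resulting $O(1)$ parallel and $O(1/s_j)$ sequential costs across levels. Your additional remarks about level-skipping and the timing of the population update are sound but not needed beyond what the paper's one-paragraph proof already implicitly uses.
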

\begin{proof}
The update scheme chooses to use $\lceil 1/s_i \rceil$ islands if the algorithm is in $A_i$.
Then the probability of finding an improvement in one generation is at least $1-(1-s_i)^{1/s_i} \ge 1-1/e$. The expected parallel time until this happens is at most $e/(e-1)$ and so the expected sequential time is at most $e/(e-1) \cdot \ceil{1/s_i} \le 2e/(e-1) \cdot 1/s_i$. Summing up these expectations for all fitness levels from $i$ to~$m-1$ proves the two bounds.
\end{proof}

In some situations it is possible to design schemes that perform even better than the above bound suggests. For instance, for trap functions the best strategy would be to use a very large population in the first generation so that the optimum is found with high probability, and before the algorithm is tricked to increasing the distance to the global optimum.

\section{Bounds for Example Functions}
\label{sec:example}

The previous bounds are applicable in a very general context, with arbitrary fitness functions. We also give results for selected example functions to estimate possible speed-ups in more concrete settings.

We consider the same example functions and function classes that have been investigated in~\cite{Lassig2010a}. The goal is the maximization of a pseudo-Boolean function $f \colon \{0, 1\}^n \to \R$. For a search point $x \in \{0, 1\}^n$ write $x = x_1 \dots x_n$, then $\OM(x) := \sum_{i=1}^n x_i$ counts the number of ones in $x$ and $\LO(x) := \sum_{i=1}^n \prod_{j=1}^i x_i$ counts the number of leading ones in $x$. A function is called \emph{unimodal} if every non-optimal search point has a Hamming neighbor (\ie, a point with Hamming distance 1 to it) with strictly larger fitness.
For $1 \le k \le n$ we also consider
\[
\Jump_k := \begin{cases}
k + \sum_{i=1}^n x_i, & \textrm{if $\sum_{i=1}^n x_i \le n-k$ or $x = 1^n$ ,}\\
\sum_{i=1}^n (1-x_i) & \textrm{otherwise}\;.
\end{cases}
\]
This function has been introduced by Droste, Jansen, and Wegener~\cite{Droste2002} as a function with tunable difficulty. Evolutionary algorithms typically have to perform a jump to overcome a gap by flipping $k$ specific bits.

For these functions we obtain bounds for $\sequentialtime$ and $\paralleltime$ as summarized in Table \ref{tab:running-times}.
The lower bounds for $\E{\sequentialtime}$ on \OM and \LO follow directly from \cite{Sudholt2010a} for all schemes.

\begin{table}[h!]
\centering\small
\begin{tabular}{|p{2.5cm}|l|p{1.9cm}|p{2.3cm}|}\hline
          & Scheme & $\E{\sequentialtime}$ & $\E{\paralleltime}$\\\hline
$\OM$     & A         & $\Theta(n\log n)$       & $O(n\log n)$    \\
          & B         & $\Theta(n\log n)$       & $O(n)$          \\
          & non-oblivious & $\Theta(n\log n)$       & $O(n)$          \\\hline
$\LO$     & A         & $\Theta(n^2)$           & $\Theta(n\log n)$    \\
          & B         & $\Theta(n^2)$           & $O(n)$          \\
          & non-oblivious & $\Theta(n^2)$           & $O(n)$          \\\hline
unimodal $f$ & A         & $O(dn)$                 & $O(d\log n)$\\
with $d$ $f$-values          & B         & $O(dn)$                 & $O(d+\log n)$\\
          & non-oblivious & $O(dn)$                 & $O(d)$\\\hline
$\Jump_k$ & A         & $O(n^k)$           & $O(n\log n)$  \\
with $k\geq 2$ & B         & $O(n^k)$           & $O(n+k\log n)$  \\
          & non-oblivious & $O(n^k)$           & $O(n)$          \\\hline
\end{tabular}\smallskip
\caption{Asymptotic bounds for expected parallel running times $\E{\paralleltime}$ and expected sequential running times $\E{\sequentialtime}$ for the parallel \EA and the \EAL with adaptive population models.}
\label{tab:running-times}
\end{table}

\begin{theorem}\label{the:upperBoundsParallelTime}
For the parallel \EA and the \EAL with adaptive population models the upper bounds for $E(\sequentialtime)$ and $E(\paralleltime)$ hold as given in Table \ref{tab:running-times}.
\end{theorem}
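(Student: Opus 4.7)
The plan is to instantiate the general fitness-level upper bounds of Theorems~\ref{the:upper-bound-for-A}, \ref{the:upper-bound-for-B}, \ref{the:improved-upper-bound-for-B}, and \ref{the:upper-bound-for-C} with the canonical $f$-based partitions that are standard for these four test families, pessimistically assuming initialisation on the lowest fitness level so that only a single inner sum over levels has to be estimated. The three quantities to control are $\sum 1/s_j$ (for the sequential bounds and a factor in Scheme~A), $\sum \log(2/s_j)$ (for Scheme~A's parallel bound), and the combination $3(m-i-1)+\log(1/s_i)+\sum_{j}\positive{\log(1/s_j)-\log(1/s_{j-1})}$ from Theorem~\ref{the:improved-upper-bound-for-B} for Scheme~B. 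A useful trick throughout is that each $s_j$ may be replaced by any valid lower bound, which we exploit to make the sequence monotone so that the simplified form of Theorem~\ref{the:improved-upper-bound-for-B} applies.

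For \OM{} I would use $A_i=\{x:|x|_1=i\}$ with $s_i\ge(n-i)/(en)$. Then $\sum_{j=0}^{n-1}1/s_j=en\cdot H_n=O(n\log n)$ gives the sequential bounds for all three schemes, while the crude estimate $\log(2/s_j)\le\log(2en)$ summed over $n$ levels yields $O(n\log n)$ for Scheme~A's parallel time. Since the $s_i$ are non-increasing, the simplified form of Theorem~\ref{the:improved-upper-bound-for-B} yields Scheme~B's parallel bound $O(m+\log(1/s_{m-1}))=O(n)$, and Theorem~\ref{the:upper-bound-for-C} gives $O(n)$ for the non-oblivious scheme. For \LO, the partition $A_i=\{x:\LO(x)=i\}$ gives $s_i\ge 1/(en)$, so $\sum 1/s_j=O(n^2)$, $\sum\log(2/s_j)=n\log(2en)=O(n\log n)$, and, the $s_i$ being constant, the simplified form of Theorem~\ref{the:improved-upper-bound-for-B} yields $O(n+\log n)=O(n)$ for Scheme~B.

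For unimodal $f$ with $d$ distinct fitness values, every non-optimal point has a strictly fitter Hamming neighbour, so $s_i\ge(1/n)(1-1/n)^{n-1}\ge 1/(en)$ suffices on every level. This uniform lower bound immediately gives $\sum 1/s_j=O(dn)$ and Scheme~A's parallel bound $\sum\log(2/s_j)=O(d\log n)$. Since with the uniform lower bound the $s_j$-sequence is non-increasing (constant), the simplified form of Theorem~\ref{the:improved-upper-bound-for-B} gives Scheme~B the parallel bound $O(d+\log n)$, and Theorem~\ref{the:upper-bound-for-C} contributes $O(d)$ for the non-oblivious scheme.

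The main obstacle is $\Jump_k$, whose canonical partition mixes an $\OM$-like ascending phase, a deceptive plateau, and one hard step of probability $\Theta(n^{-k})$ to the optimum, so the natural $s_j$ are not monotone. Ordering levels by fitness, every level but the local optimum $\{x:|x|_1=n-k\}$ admits a strictly fitter Hamming neighbour, hence $s_j\ge 1/(en)$ there; for the final step $s_{m-1}\ge(1/n)^k(1-1/n)^{n-k}=\Omega(n^{-k})$. Then $\sum 1/s_j$ is dominated by $\Theta(n^k)$ for $k\ge 2$, matching the sequential bound, while the crude estimate $\sum\log(2/s_j)=O(n\log n)$ matches Scheme~A's parallel bound. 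For Scheme~B I would use the above lower bounds which form a non-increasing sequence; the simplified form of Theorem~\ref{the:improved-upper-bound-for-B} then gives $O((m-1)+\log(1/s_{m-1}))=O(n+k\log n)$. Finally Theorem~\ref{the:upper-bound-for-C} yields $O(n)$ for the non-oblivious scheme, completing the table.
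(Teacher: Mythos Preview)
Your proposal is correct and follows essentially the same approach as the paper: both instantiate Theorems~\ref{the:upper-bound-for-A}, \ref{the:upper-bound-for-B}, \ref{the:improved-upper-bound-for-B}, and \ref{the:upper-bound-for-C} with the canonical partitions and the standard lower bounds $s_i\ge(n-i)/(en)$ for \OM, $s_i\ge 1/(en)$ for \LO{} and unimodal functions, and $s_j\ge 1/(en)$ together with $s_{m-1}=\Omega(n^{-k})$ for $\Jump_k$, then read off the table entries from the resulting sums. Your explicit remark that one may replace each $s_j$ by a cruder lower bound to force the monotonicity required by the simplified form of Theorem~\ref{the:improved-upper-bound-for-B} is a nice clarification; the paper uses this implicitly (e.g.\ the uniform $1/(en)$ bound for unimodal functions and for the non-jump levels of $\Jump_k$) without stating it as a separate device.
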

\begin{proof}
The upper bounds for Scheme~A follow from Theorem~\ref{the:upper-bound-for-A}, for Scheme~B from Theorems~\ref{the:upper-bound-for-B} and~\ref{the:improved-upper-bound-for-B} and for the non-oblivious scheme from Theorem~\ref{the:upper-bound-for-C}. 
Starting pessimistically from the first fitness level, the following bounds hold:
\begin{itemize}
\item For \OM we use the canonical $f$-based partition $A_i := \{x \mid \OM(x) = i\}$ and the corresponding success probabilities $s_i \ge (n-i)/n \cdot (1-1/n)^{n-1} \ge (n-i)/(en)$. Hence, $E(\paralleltime_{\mathrm{A}}) \leq 2 \sum_{i=1}^{n-1} \log(\frac{2en}{n-i})\leq 2n\log(2en)=O(n\log n)$,
    \begin{eqnarray*}
E(\sequentialtime_{\mathrm{A}})&\le& 2\sum_{i=0}^{n-1} \frac{1}{s_i}\leq 2\sum_{i=0}^{n-1} \frac{en}{n-i}\\
                  &=& 2en\sum_{i=1}^n \frac{1}{i}=2en\cdot[(\ln n)+1]\;,
\end{eqnarray*}
$E(\paralleltime_{\mathrm{B}}) \leq (3(n-2)+\log(2en))=O(n)$ and $E(\sequentialtime_{\mathrm{B}})\le 3en\cdot[(\ln n)+1]$, $E(\paralleltime_{\mathrm{no}})=O(n)$ and $E(\sequentialtime_{\mathrm{no}})=O(n\log n)$.
\item For \LO we use the canonical $f$-based partition $A_i := \{x \mid \LO(x) = i\}$ and the corresponding success probabilities $s_i \ge 1/n \cdot (1-1/n)^{n-1} \ge 1/(en)$. Hence, $E(\paralleltime_{\mathrm{A}}) \leq 2 \sum_{i=0}^{n-1} \log(2en)=2n\log(2en)=O(n\log n)$,
    \[
E(\sequentialtime_{\mathrm{A}})\le 2\sum_{i=0}^{n-1} \frac{1}{s_i}\leq 2\sum_{i=0}^{n-1} en =2en^2\;,
\]
$E(\paralleltime_{\mathrm{B}}) \leq (3(n-2)+\log(en))=O(n)$, $E(\sequentialtime_{\mathrm{B}})\le 3en^2$, $E(\paralleltime_{\mathrm{no}})=O(n)$ and $E(\sequentialtime_{\mathrm{no}})=O(n^2)$.
\item For unimodal functions with $d$ function values 
we use corresponding success probabilities $s_i \ge 1/(en)$. Hence,
    $E(\paralleltime_{\mathrm{A}}) \leq 2 \sum_{i=1}^{d-1} \log (2en)\leq 2d\log(2en)=O(dn)$,
    \[
E(\sequentialtime_{\mathrm{A}})\le 2\sum_{i=1}^{d-1} \frac{1}{s_i}\leq 2\sum_{i=1}^{d-1} en =2edn\;,
\]
$E(\paralleltime_{\mathrm{B}}) \leq 3(d-2)+\log(en)=O(d+\log n)$, $E(\sequentialtime_{\mathrm{B}})=3edn$, $E(\paralleltime_{\mathrm{no}})=O(d)$ and $E(\sequentialtime_{\mathrm{no}})=O(dn)$.
\item For $\Jump_k$ functions with $k \ge 2$ and all individuals having neither $n-k$ nor $n$ 1-bits, an improvement is found by either increasing or decreasing the number of 1-bits. This corresponds to optimizing \ONEMAX. In order to improve a solution with $n-k$ 1-bits, a specific bit string with Hamming distance $k$ has to be created, which has probability $s_{n-k}$ at least
\[
 \left(\frac{1}{n}\right)^k \cdot \left(1-\frac{1}{n}\right)^{n-k}\geq \left(\frac{1}{n}\right)^k \cdot \left(1-\frac{1}{n}\right)^{n-1}\geq\frac{1}{en^k}\;.
\]
Hence, $E(\paralleltime_{\mathrm{A}}) \leq O(n \log n) + 2 \log(en^k) \le O(n \log n) + 2k\log(en)=O(n\log n)$,
$
E(\sequentialtime_{\mathrm{A}})\le O(n^k),
$
$E(\paralleltime_{\mathrm{B}}) \leq O(n)+k\log(en)=O(n+k\log n)$, $E(\sequentialtime_{\mathrm{B}})\le O(n^k)$, $E(\paralleltime_{\mathrm{no}}) = O(n)$ and $E(\sequentialtime_{\mathrm{no}})=O(n^k)$.\qedhere
\end{itemize}
\end{proof}

It can be seen from Table~\ref{tab:running-times} that both our schemes lead to significant speed-ups in terms of the parallel time.
The speed-ups increase with the difficulty of the function. This becomes obvious when comparing the results on \OM and \LO and it is even more visible for $\Jump_k$.

The upper bounds for $\E{\paralleltime_{\mathrm{B}}}$ are always asymptotically lower than those for $\E{\paralleltime_{\mathrm{A}}}$, except for $\Jump_k$ with $k = \Theta(n)$. However, without corresponding lower bounds we cannot say whether this is due to differences in the real running times or whether we simply proved tighter guarantees for B.
We therefore consider the function \LO in more detail and prove a lower bound for A. This demonstrates that Scheme~B can be asymptotically better than Scheme~A on a concrete problem.
\begin{theorem}
\label{the:parallel-time-on-LO}
For the parallel \EA and the \EAL with adaptive population models on \LO we have $\E{\paralleltime_{\mathrm{A}}} = \Omega(n \log n)$.
\end{theorem}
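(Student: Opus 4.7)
My plan is to decompose $\paralleltime_{\mathrm{A}}$ by fitness level and exploit the ``restart to $\mu=1$'' rule of Scheme~A after every success. For $0 \le i \le n-1$ let $\tau_i$ be the number of generations during which the best fitness in the population equals~$i$, and let $E_i$ be the event that level~$i$ is visited at all; then $\paralleltime_{\mathrm{A}} = \sum_{i=0}^{n-1} \tau_i$.

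First I would observe that an improvement on \LO from any state with $i<n$ leading ones requires flipping the $(i{+}1)$-th bit while leaving the leading prefix intact, so the per-island, per-generation improvement probability is $(1/n)(1-1/n)^i \le 1/n$. Next I would argue that whenever level~$i$ is reached for the first time---either at initialization (where $\mu_1=1$ by definition) or through a successful generation (where Scheme~A sets $\mu_{t+1}:=1$)---the ensuing departure process matches exactly the setup of Lemma~\ref{lem:tail-bounds-and-expectations} with $k=0$. Applying the third inequality of that lemma with the pessimistic upper bound $p \le 1/n$ gives, conditional on the history $\mathcal{H}_i$ up to the first visit of level~$i$, the bound $\E{\tau_i \mid E_i, \mathcal{H}_i} > \log n - 3$. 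Averaging out $\mathcal{H}_i$ yields $\E{\tau_i} \ge (\log n - 3)\Prob{E_i}$, and summing over $i$ gives
\[
\E{\paralleltime_{\mathrm{A}}} \;\ge\; (\log n - 3)\sum_{i=0}^{n-1}\Prob{E_i} \;=\; (\log n - 3)\,\E{N},
\]
where $N$ is the number of distinct fitness levels visited during the run.

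It remains to show $\E{N} = \Omega(n)$, which is a classical free-rider calculation for \LO. Whenever an improvement occurs at level~$i$, the fitness gain is $1 + F_i$, where $F_i$ is the length of the all-ones run immediately following position $i+1$ in the offspring; since the bits beyond the current leading prefix remain uniformly distributed throughout the run, $\E{F_i} \le 1$. Wald-style bookkeeping, together with $\E{L_1} \le 1$ for the initial number of leading ones, then gives $\E{N} \ge (n - O(1))/2 = \Omega(n)$. Combining this with the previous display yields $\E{\paralleltime_{\mathrm{A}}} = \Omega(n \log n)$ as claimed.

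The main obstacle is making the per-level conditioning rigorous: $\tau_i$ is a function of the entire trajectory, so I need to verify that, conditioned on first reaching level~$i$, the random variable $\tau_i$ stochastically dominates $\paralleltime_0(p_i)$ of Lemma~\ref{lem:tail-bounds-and-expectations}. This relies crucially on the $\mu=1$ reset enforced by Scheme~A, which turns each freshly reached level into an independent copy of the doubling process studied in Lemma~\ref{lem:tail-bounds-and-expectations}. Once this strong-Markov-type reduction is in place, the remaining steps are routine bookkeeping.
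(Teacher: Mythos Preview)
Your decomposition $\paralleltime_{\mathrm{A}}=\sum_i\tau_i$ and the per-level lower bound $\E{\tau_i\mid E_i}>\log n-3$ via Lemma~\ref{lem:tail-bounds-and-expectations} with $k=0$ are both correct; the reset $\mu:=1$ after every success indeed makes this reduction straightforward, so the obstacle you flag is not the real one. The genuine gap is in your claim $\E{N}=\Omega(n)$ via the ``classical free-rider calculation''. In a successful generation the population size is $\lambda=2^{t-1}$, where $t=\tau_i$ is the random number of generations spent on level~$i$, and the algorithm keeps the \emph{best} of the $\lambda$ offspring. The fitness gain is therefore not $1+F_i$ for a single improving mutant but the maximum over \emph{all} improving offspring; the bits of the selected offspring beyond position $i{+}1$ are biased towards~$1$ by this selection and are \emph{not} uniform. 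The $(1{+}1)$-EA bound $\E{F_i}\le 1$ does not carry over, and the expected gain per success genuinely depends on~$\lambda$ and hence on the random~$t$.

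This is exactly the subtlety that occupies most of the paper's proof. There the expected \LO-increase in a successful generation is bounded by $2\,\E{I\mid I\ge 1}\le 2(1+2^t/n)$, where $I$ is the number of improving offspring in that generation, and this is then integrated against the tail bound $\Prob{\paralleltime_i>\lceil\log n\rceil+\alpha+1}\le\exp(-2^\alpha)$ from Lemma~\ref{lem:tail-bounds-and-expectations} to obtain a finite constant (just under~$30$). Once that constant is in hand, your scaffolding completes the proof with $\E{N}\ge n/30-O(1)$ in place of $\E{N}\ge n/2-O(1)$. So the missing piece is precisely the step you dismissed as routine: controlling the interaction between the doubling schedule of Scheme~A and best-of-$\lambda$ selection.
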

\begin{proof}
We consider a pessimistic setting (pessimistic for proving a lower bound) where an improvement has probability exactly~$1/n$. This ignores that all leading ones have to be conserved in order to increase the best \LO-value. We show that with probability $\Omega(1)$ at least $n/30$ improvements are needed in this setting. As by Lemma~\ref{lem:tail-bounds-and-expectations} the expected waiting time for an improvement is at least $\max\{0, (\log n)-3\}$, the conditional expected parallel time is $\Omega(n \log n)$. By the law of total expectation, also the unconditional expected parallel time is then $\Omega(n \log n)$.

Let us bound the expected increase in the number of leading ones on one fitness level. Let $\paralleltime_i$ denote the random number of generations until the best fitness increases when the algorithm is on fitness level~$i$.
By the law of total expectation the expected increase in the best fitness in this generation equals
\begin{equation}
\label{eq:LO-increase}
\sum_{t=1}^\infty \Prob{\paralleltime_i = t} \cdot \E{\text{\LO-increase} \mid \paralleltime_i = t}\;.
\end{equation}
The expected increase in the number of leading ones can be estimated as follows. With $\paralleltime_i = t$ the number of mutations in the successful generation is $2^{t-1}$. Let $I$ denote the number of mutations that increase the current best \LO-value. A well-known property of \LO is that when the current best fitness is~$i$ then the bits at positions $i+2, \dots, n$ are uniform.
Bits that form part of the leading ones after an improvement are called \emph{free riders}. The probability of having $k$ free riders is thus $2^{-k}$ (unless the end of the bit string is reached) and the expected number of free riders is at most $\sum_{k=0}^{\infty} 2^{-k} = 1$.

The uniformity of ``random'' bits at positions $i+2, \dots, n$ holds after any specific number of mutations and in particular after the mutations in generation $\paralleltime_i$ have been performed.
However, when looking at multiple improvements, the free-rider events are not necessarily independent as the ``random'' bits are very likely to be correlated. The following reasoning avoids these possible dependencies. We consider the improvements in generation $\paralleltime_i$ one-by-one. If $F_1$ denotes the random number of free riders gained in the first improvement, when considering the second improvement the bits at positions $i+3+F_1, \dots, n$ are still uniform. In some sense, we give away the free riders from a fitness improvements for free for all following improvements.
This leads to an estimation of $1+F_1$ for the gain in the number of leading ones.

Iterating this argument, the expected total number of leading ones gained is thus bounded by $2I$, the expectation being taken for the randomness of free riders. Also considering the expectation for the random number of improvements yields the bound $2 \E{I \mid I \ge 1}$ as $I$ has been defined with respect to the last (\ie successful) generation. We also observe $\E{I \mid I \ge 1} \le 1 + \E{I} \le 1+2^t/n$. Plugging this into Equation~\eqref{eq:LO-increase} yields
\allowdisplaybreaks[3]
\begin{align*}
&\sum_{t=1}^\infty \Prob{\paralleltime_i = t} \cdot (2+2^{t+1}/n)\\
=\;& 2 + 2\sum_{t=0}^\infty \Prob{\paralleltime_i = t+1} \cdot 2^{t+1}/n\\
\le\;& 2 + 2\sum_{t=0}^\infty \Prob{\paralleltime_i > t} \cdot 2^{t+1}/n\\
\le\;& 2 + 2\sum_{t=0}^{\ceil{\log n}} 2^{t+1}/n + 2\sum_{t=\ceil{\log n}+1}^\infty \Prob{\paralleltime_i > t} \cdot 2^{t+1}/n\;.
\end{align*}
\allowdisplaybreaks[0]
The first sum is at most~16. Using Lemma~\ref{lem:tail-bounds-and-expectations} to estimate the second sum, we arrive at the lower bound
\begin{align*}
&18 + 2\sum_{\alpha=0}^\infty \Prob{\paralleltime_i > \ceil{\log n} + \alpha + 1} \cdot 2^{\ceil{\log n} + \alpha+2}/n\\
\le\;& 18 + 2\sum_{\alpha=0}^\infty \exp(2^{-\alpha}) \cdot 2^{\ceil{\log n} + \alpha+2}/n\\
\le\;& 18 + 16 \cdot \sum_{\alpha=0}^\infty \exp(2^{-\alpha}) \cdot 2^{\alpha}\\
<\;& 29.8\;.
\end{align*}
With probability $1/2$ the algorithm starts with no leading ones, independently from all following events. The expected number of leading ones after $n/30$ improvements is at most $29.8/30 \cdot n$. By Markov's inequality the probability of having created $n$ leading ones is thus at most $29.8/30$ and so with probability $1/2 \cdot 0.2/30 = \Omega(1)$ having $n/30$ improvements is not enough to find a global optimum.
\end{proof}

\section{Generalizations \& Extensions}
\label{sec:extensions}

We finally discuss generalizations and extensions of our results.

One interesting question is in how far our results change if the population is not doubled or halved, but instead multiplied or divided by some other value $b > 1$. Then the results would change as follows. With some potential adjustments to constant factors, the $\log$-terms in the parallel optimization times in Theorems~\ref{the:upper-bound-for-A}, \ref{the:upper-bound-for-B} and~\ref{the:improved-upper-bound-for-B} would have to be replaced by $\log_b$. For the sequential optimization times stated in these theorems one would need to multiply these bounds by $b/2$. This means that a larger $b$ would further decrease the parallel optimization times at the expense of a larger sequential optimization time.

Our analyses can also be transferred towards the adaptive scheme presented by Jansen, De~Jong, and Wegener~\cite{Jansen2005a}. Recall that in their scheme the population size is divided by the number of successes. In case of one success the population size remains unchanged. This only affects the constant factors in our upper bounds. When the number of successes is large, the population size might decrease quickly.
In most cases, however, the number of successes will be rather small; for instance, the lower bound for \LO, Theorem~\ref{the:parallel-time-on-LO}, has shown that the expected number of successes in a successful generation is constant. However, it might be possible that after a difficult fitness level an easier fitness level is reached and then the number of successes might be much higher. In an extreme case their scheme can decrease the population size like Scheme~A. In some sense, their scheme is somewhat ``in between'' A and B. With a slight adaptation of the constants, the upper bound for Scheme~A from Theorem~\ref{the:upper-bound-for-A} can be transferred to their scheme.

Another extension of the results above is towards maximum population sizes.
Although we have argued in Section~\ref{sec:tail-bounds-and-expectations} that the population size does not blow up too much, in practice the maximum number of processors might be limited. The following theorem about $E(\paralleltime_{\mathrm{A}})$ for maximum population sizes can be proven by applying arguments from~\cite{Lassig2010a}.

\begin{theorem}
\label{the:method-max}
The expected parallel optimization time of Scheme~A for a maximum population size $\mu:=\mu_{\max} > 1$ is bounded by
\[
E(\paralleltime_{\mathrm{A}}) \;\le\;  m\cdot [\log \mu_{\max}+2] +\frac{2}{\mu_{\max}} \sum_{i=1}^{m-1} \frac{1}{s_i}\;.
\]
\end{theorem}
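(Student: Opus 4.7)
The plan is to apply the fitness-level method pessimistically (assuming initialization in $A_1$) and bound, for every $1 \le i \le m-1$, the expected parallel time $E[T_i]$ to leave level $i$. Summing these bounds and using $m-1 \le m$ will give the claimed inequality, so the whole argument reduces to a per-level analysis of Scheme~A under the cap.

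For a fixed level $i$, I would split the parallel time into a \emph{growth phase} and a \emph{saturation phase}. During the growth phase the population doubles each unsuccessful generation; starting from $\mu_t = 1$ it reaches $\mu_{\max}$ after at most $\lceil \log \mu_{\max}\rceil$ generations. Either an improvement occurs during these at most $\log \mu_{\max}$ generations, or else the algorithm enters the saturation phase with $\mu_t = \mu_{\max}$ in every subsequent generation. Because Scheme~A only modifies $\mu_t$ based on success/failure of the current generation and the underlying EA is memoryless, each saturation-phase generation succeeds independently with probability at least $q := 1-(1-s_i)^{\mu_{\max}}$. Thus $E[T_i] \le \lceil\log \mu_{\max}\rceil + 1/q$.

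The next step is to turn $1/q$ into the form appearing in the theorem. Using $(1-s_i)^{\mu_{\max}} \le e^{-s_i \mu_{\max}}$ together with the elementary estimates $1-e^{-x} \ge x/2$ for $x \in [0,1]$ and $1-e^{-x} \ge 1/2$ for $x \ge 1$, one obtains $q \ge \min\{1/2,\; s_i\mu_{\max}/2\}$, and hence $1/q \le 2 + 2/(s_i\mu_{\max})$. This yields the per-level bound $E[T_i] \le \log \mu_{\max} + 2 + 2/(s_i\mu_{\max})$; summing over $i = 1,\dots,m-1$ and bounding $m-1 \le m$ produces exactly $m(\log\mu_{\max}+2) + (2/\mu_{\max})\sum_{i=1}^{m-1} 1/s_i$.

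The main obstacle, such as it is, lies in the bookkeeping that allows the saturation phase to be analysed as an independent geometric waiting time: one has to argue that conditioning on ``growth phase unsuccessful'' does not compromise the hypothesis that each generation succeeds with probability at least $s_i$. This is exactly the observation used in Lemma~\ref{lem:tail-bounds-and-expectations} and in the fitness-level arguments of~\cite{Lassig2010a}, namely that the lower bound $s_i$ depends only on the current best fitness level (which remains in $A_i$ throughout both phases) and not on the history of prior failures, so the per-generation Bernoulli structure is preserved. The remaining steps are routine estimates.
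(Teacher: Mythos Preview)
Your proposal is correct and follows essentially the same route as the paper: split each level into a growth phase of length $\log \mu_{\max}$ and a saturation phase governed by the success probability $q=1-(1-s_i)^{\mu_{\max}}$, then bound $1/q$ by a case distinction on whether $s_i\mu_{\max}\le 1$. The only cosmetic difference is in the elementary estimates used for $q$: the paper invokes $(1-x)^y\le 1-xy/2$ for $0\le xy\le 1$ (from~\cite{Lassig2010a}) and $1-(1-s_i)^{\mu_{\max}}\ge 1-1/e$ in the other case, obtaining $1/q\le e/(e-1)+2/(s_i\mu_{\max})$, whereas you pass through $e^{-x}$ and get the slightly weaker $1/q\le 2+2/(s_i\mu_{\max})$; both collapse to the same stated bound after absorbing the constant into the $+2$.
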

\begin{proof}
We pessimistically estimate the expected parallel time by the time until the population consists of $\mu_{\max}$ islands plus the expected optimization time if $\mu_{\max}$ islands are available. The time until $\mu_{\max}$ islands are involved is $\log \mu_{\max}$ on one fitness level. Hence, summing up all levels pessimistically gives $m\log \mu_{\max}$. For $\mu_{\max}$ islands the success probability on fitness level $i$ with success probability $s_i$ for one island is given by $1-(1-s_i)^{\mu_{\max}}$. Hence, the expected time for leaving fitness level~$i$ if $\mu_{\max}$ islands are available is at most $1/[1-(1-s_i)^{\mu_{\max}}]$.
Now we consider two cases.

If $s_i\cdot \mu_{\max}\leq 1$ we have
$1-(1-s_i)^{\mu_{\max}} \geq 1-(1-s_i\mu_{\max}/2)=s_i\mu_{\max}/2$ because for all $0\leq xy\leq 1$ it holds $(1-x)^y\leq 1-xy/2$~\cite[Lemma~1]{Lassig2010a}.
Otherwise, if $s_i\cdot \mu_{\max} >1$ we have
$
1-(1-s_i)^{\mu_{\max}} \geq 1-e^{-s_i \mu_{\max}} \geq 1-\frac{1}{e}
$.
Thus,
\begin{eqnarray*}
\sum_{i=1}^{m-1} \frac{1}{1-(1-s_i)^{\mu_{\max}}}&\leq& \sum_{i=1}^{m-1} \max\left\{\frac{1}{1-1/e},\frac{2}{\mu_{\max} \cdot s_i}\right\}\\
&\leq& m\cdot \frac{e}{e-1}+\frac{2}{\mu_{\max}} \sum_{i=1}^{m-1} \frac{1}{s_i}\;.
\end{eqnarray*}
Adding the expected waiting times until $\mu_{\max}$ islands are involved yields the claimed bound.
\end{proof}

In terms of our test functions \OM, LO, unimodal functions, and $\Jump_k$, this leads to the following result that can be proven like Theorem~\ref{the:upperBoundsParallelTime}.
\begin{corollary}
\label{cor:application-mumax}
For the parallel \EA and the \EAL with Scheme~A the following holds for a maximum population size $\mu:=\mu_{\max} > 1$:
\begin{itemize}
\item $E(\paralleltime_{\mathrm{A}})=O(n\log \mu_{\max}+n\log(n)/ \mu_{\max})$ for $\OM$, which gives $O(n\log\log n)$ for $\mu_{\max}=\log n$,
\item $E(\paralleltime_{\mathrm{A}})=O(n\log \mu_{\max}+n^2/ \mu_{\max})$ for $\LO$, which gives $O(n\log n)$ for $\mu_{\max}=n$,
\item $E(\paralleltime_{\mathrm{A}})=O(d\log \mu_{\max}+dn/ \mu_{\max})$ for unimodal functions with $d$ function values, which gives $O(d\log n)$ for $\mu_{\max}=n$,
\item $E(\paralleltime_{\mathrm{A}})=O(n\log \mu_{\max}+n^k/ \mu_{\max})$ for $\Jump_k$, which gives $O(nk\log n)$ for $\mu_{\max}=n^{k-1}$.
\end{itemize}
\end{corollary}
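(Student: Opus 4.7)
The plan is to specialize Theorem~\ref{the:method-max} to each of the four function classes by plugging in the $f$-based partitions and level-leaving probabilities $s_i$ that have already been collected in the proof of Theorem~\ref{the:upperBoundsParallelTime}. The corollary is then essentially a bookkeeping exercise combining that generic bound with an appropriate choice of $\mu_{\max}$ for each function.

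First I would assemble, for each class, the value of $m$ and the sum $\sum_{i=1}^{m-1} 1/s_i$. For \OM, the canonic partition has $m=n$ and $s_i \ge (n-i)/(en)$, so $\sum 1/s_i \le en H_{n-1} = O(n\log n)$. For \LO, $m=n$ and $s_i \ge 1/(en)$, hence $\sum 1/s_i = O(n^2)$. For a unimodal function with $d$ fitness values, $m=d$ and again $s_i \ge 1/(en)$, giving $\sum 1/s_i = O(dn)$. For $\Jump_k$ with $k\ge 2$, $m=n$ and the sum is dominated by the single ``gap'' level $n-k$, where $s_{n-k} \ge 1/(en^k)$, while all other levels jointly contribute only $O(n\log n)$; hence $\sum 1/s_i = O(n^k)$.

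Next, I would substitute each pair $(m, \sum_i 1/s_i)$ into the bound of Theorem~\ref{the:method-max}, producing the four generic upper bounds stated in Corollary~\ref{cor:application-mumax} as functions of $\mu_{\max}$. The concrete simplifications then follow by evaluating at the $\mu_{\max}$ listed in each bullet, which has been chosen so as to balance the growth term $m\log \mu_{\max}$ against the reduced waiting term $(\sum_i 1/s_i)/\mu_{\max}$. Explicitly: for \OM with $\mu_{\max}=\log n$ the two terms are $n\log\log n$ and $n$; for \LO with $\mu_{\max}=n$ they are $n\log n$ and $n$; for unimodal functions with $\mu_{\max}=n$ they are $d\log n$ and $d$; and for $\Jump_k$ with $\mu_{\max}=n^{k-1}$ they are $n(k-1)\log n$ and $n$, yielding $O(nk\log n)$.

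There is essentially no conceptual obstacle here, since all ingredients are already in place. The only step that deserves a moment's thought is $\Jump_k$: one has to observe that $\sum_i 1/s_i$ is driven by the single hard level of cost $\Theta(n^k)$, after which balancing $n\log \mu_{\max}$ against $n^k/\mu_{\max}$ immediately suggests the advertised choice $\mu_{\max} = n^{k-1}$. Once that is in place, the remaining arithmetic for each function class is a routine substitution into Theorem~\ref{the:method-max}.
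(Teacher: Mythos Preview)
Your proposal is correct and follows exactly the approach the paper indicates: specialize Theorem~\ref{the:method-max} using the same $f$-based partitions and success probabilities $s_i$ assembled in the proof of Theorem~\ref{the:upperBoundsParallelTime}, then evaluate at the stated choices of $\mu_{\max}$. One minor remark: the second terms you compute from Theorem~\ref{the:method-max} are $\sum_i 1/s_i/\mu_{\max}$ without the additional $\log(\mu_{\max})$ factor that appears in the corollary's general bounds, but this only makes your bounds tighter and the final simplified asymptotics agree in every case.
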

Note that Corollary~\ref{cor:application-mumax} has led to an improvement of $\E{\paralleltime_{\mathrm{A}}}$ from $O(n \log n)$ to $O(n \log \log n)$ for $\mu_{\max} = \log n$. This obviously also holds in the setting of unrestricted population sizes.

\section{Conclusions}
\label{sec:Conclusions}

We have presented two schemes for adapting the offspring population size in evolutionary algorithms and, more generally, the number of islands in parallel evolutionary algorithms. Both schemes double the population size in each generation that does not yield an improvement. Despite the exponential growth, the expected sequential optimization time is asymptotically optimal for tight $f$-based partitions. In general, we obtain bounds that are asymptotically equal to upper bounds via the fitness-level method.

In terms of the parallel computation time expected waiting times on a fitness level can be replaced by their logarithms for both schemes, compared to a serial EA.
This yields a tremendous speed-up, in particular for functions where finding improvements is difficult.
Scheme~B, doubling or halving the population size in each generation, turned out to be more effective than resets to a single island as in Scheme~A.
This is because B can quickly decrease the population size if necessary. The effort spent while this happens does not affect the asymptotic bounds for expected parallel and sequential times.

Apart from our main results, we have introduced the notion of tight $f$-based partitions and new arguments from amortized analysis of algorithms to the theory of evolutionary algorithms.

An open question is
how our schemes perform in situations where the fitness-level method does not provide good upper bounds. In this case our bounds may be off from the real expected running times. In particular, there may be examples where
increasing the offspring population size by too much might be detrimental. One constructed function where large offspring populations perform badly was presented in~\cite{Jansen2005a}. Future work could characterize function classes for which our schemes are efficient in comparison to the real expected running times. The notion of tight $f$-based partitions is a first step in this direction.

\subsection*{Acknowledgments}
The authors would like to thank the German Academic Exchange Service for funding their research. Part of this work was done while both authors were visiting the International Computer Science Institute in Berkeley, CA, USA.
The second author was partially supported by EPSRC grant EP/D052785/1.
The authors thank Carola Winzen for many useful suggestions that helped to improve the presentation.

\balance
\bibliographystyle{abbrv}
\newcommand{\noopsort}[1]{} \newcommand{\printfirst}[2]{#1}
  \newcommand{\singleletter}[1]{#1} \newcommand{\switchargs}[2]{#2#1}

\end{document}